\documentclass[twocolumn,secnumarabic,nobibnotes,superscriptaddress,aps,prx]{revtex4-2}
\usepackage{color, xcolor, colortbl}
\usepackage{graphicx}
\usepackage{amsthm,amsmath,amssymb,amsfonts}
\usepackage{dcolumn}% Align table columns on decimal point
\usepackage{url}
\usepackage{epstopdf}
\usepackage{enumitem}
\usepackage{algorithm}
\usepackage{algpseudocode}
\usepackage{bm}
\usepackage[caption=false]{subfig}
\usepackage{appendix}
\usepackage{multirow}
\usepackage{braket}
\usepackage[english]{babel}
\usepackage{hyperref}
\usepackage[capitalize]{cleveref}
\usepackage{xpatch}
\usepackage{tikz}
\usepackage{adjustbox}
\usepackage{xspace}
\usetikzlibrary{quantikz}

\renewcommand{\Re}{\operatorname{Re}}
\renewcommand{\Im}{\operatorname{Im}}

\newcommand{\I}{\mathrm{i}}

\newcommand{\mc}[1]{\mathcal{#1}}

\newcommand{\wt}[1]{\widetilde{#1}}

\newcommand{\abs}[1]{\left\lvert#1\right\rvert}
\newcommand{\norm}[1]{\left\lVert#1\right\rVert}

\newcommand{\Or}{\mathcal{O}}

\newcommand{\RR}{\mathbb{R}}

\newtheorem{thm}{\protect\theoremname}
\newtheorem{lem}[thm]{\protect\lemmaname}

\newtheorem{defn}[thm]{\protect\definitionname}

\providecommand{\definitionname}{Definition}
\providecommand{\assumptionname}{Assumption}
\providecommand{\corollaryname}{Corollary}
\providecommand{\lemmaname}{Lemma}
\providecommand{\propositionname}{Proposition}
\providecommand{\remarkname}{Remark}
\providecommand{\theoremname}{Theorem}
\usepackage{bbm}



\usetikzlibrary{fit}
\tikzset{%
  highlight/.style={rectangle,rounded corners,fill=blue!15,draw,fill opacity=0.3,thick,inner sep=0pt}
}

\newcommand{\PP}{\mathbb{P}}
\newcommand{\squares}{\Lambda}
\theoremstyle{definition}

\newcommand{\rads}{r}

\newcommand{\UMMath}{Department of Mathematics, University of Michigan, Ann Arbor, MI 48109, USA}
\newcommand{\UMEECS}{Department of Electrical Engineering and Computer Science, University of Michigan, Ann Arbor, MI 48109, USA}
\newcommand{\UMDCS}{Department of Computer Science, University of Maryland, College Park, MD 20742, USA}
\newcommand{\QUICS}{Joint Center for Quantum Information and Computer Science, University of Maryland, College Park, MD 20742, USA}

\begin{document}

\title{Programmable Signal Design for Quantum Phase Estimation via Quantum Signal Processing}

\author{Zikang Jia}
\thanks{These authors contributed equally.}
\affiliation{\UMMath}

\author{Suying Liu}
\thanks{These authors contributed equally.}
\affiliation{\UMDCS}
\affiliation{\QUICS}

\author{Yulong Dong}
\email[Electronic address: ]{dongyl@umich.edu}
\affiliation{\UMEECS}

\date{\today}

\begin{abstract}
Quantum phase estimation is a central primitive in quantum algorithms and sensing, where performance is governed by the sensitivity of measurement signals to the target parameter. While existing methods have developed increasingly sophisticated inference and adaptive design strategies, the signal family used for phase learning is often largely pre-specified. Here we propose a programmable signal design framework for quantum phase estimation based on quantum signal processing, which enables the measurement signal to be tailored to the current uncertainty region. We cast phase estimation as a max-min optimization problem over admissible signals and introduce a sensitivity efficiency parameter that quantifies information gain per query depth. The resulting iterative algorithm combines optimized quantum signal transformations with structured classical inference, retaining Heisenberg-limited scaling while improving sensitivity efficiency and practical resource prefactors. Numerical results show reduced estimation variance compared with standard protocols such as robust phase estimation. Our framework also extends to Hamiltonian eigenvalue estimation in higher dimensions and establishes a quantum-classical co-design paradigm through programmable signal shaping.
\end{abstract}

\maketitle

\section{Introduction}

Quantum phase estimation is a central primitive in quantum algorithms, quantum simulation, and quantum sensing \cite{martyn_grand_2021,aspuru2005simulated,fomichev_initial_2024,shulman_suppressing_2014,obrien_error_2021}. It aims to learn an unknown parameter $\theta$ associated with a queryable unitary from measurement outcomes of parameter-dependent quantum circuits. In many settings, these outcomes define probabilities or expectation values that can be viewed as signals of $\theta$. The performance of a phase estimation protocol is governed by the local behavior of these signals, especially their derivatives near the target parameter. This local sensitivity is closely related to the Fisher information and determines the total quantum resources, such as the total query count, required to reach a target precision \cite{giovannetti_quantum_2006,giovannetti_advances_2011,gorecki_ensuremathpi-corrected_2020}.

A broad class of phase estimation methods focuses on designing efficient inference procedures within a restricted family of measurement architectures, typically based on Hadamard-test-type circuits \cite{knill_optimal_2007,berry_how_2009,kimmel_robust_2015,ding_even_2023,ni_low-depth_2023,dong_optimal_2025}. Recent works have also explored adaptive experiment design \cite{wiebe_efficient_2016,yamamoto_demonstrating_2024,lumino_experimental_2018}, filtering strategies \cite{dong_ground-state_2022,dong_multi-level_2024,karacan_enhancing_2025}, and error-corrected metrology schemes \cite{yamamoto_quantum_2025,marrero_encoded_2026} in related spectral-estimation and metrology settings. Related subspace-based approaches construct a trial subspace from time-evolved states and extract spectral information through a projected classical problem, with the main flexibility lying in subspace construction and post-processing \cite{shen_efficient_2026,cortes_quantum_2022,stair_stochastic_2023}. Nevertheless, in most existing phase estimation approaches, the signal used for inference still comes from a largely pre-specified family, such as complementary sine- and cosine-type responses, rather than being programmably synthesized to match the local uncertainty region of the target parameter. As a result, prior information is used mainly to choose experiment settings or inference procedures, while the local signal shape itself remains only weakly tunable.

This observation motivates a different perspective on phase estimation. Classical post-processing determines how information is extracted from measurement data, whereas quantum circuits determine how that information is encoded into the data in the first place. From this viewpoint, phase estimation should be treated not only as an inference problem, but also as a signal design problem. The missing degree of freedom in standard approaches is the ability to tailor the signal family itself to the current uncertainty region of $\theta$. This suggests a co-design paradigm in which both signal generation and inference are adapted to the estimation task.

In this Letter, we realize this idea using quantum signal processing (QSP). QSP provides a systematic framework for programming the signal family through phase factors \cite{low_methodology_2016,low_optimal_2017,gilyen_quantum_2019}, far beyond the fixed single-mode trigonometric responses of standard constructions. By using QSP to reshape the local response of the measurement signal near the target phase, we develop a co-designed phase estimation framework that combines programmable signal transformations with tailored classical post-processing. We term the resulting framework \emph{QSP-based phase estimation} (QSP-PE). We establish convergence guarantees and quantify the resulting improvement in sensitivity efficiency. Numerical results confirm favorable performance compared with standard phase estimation protocols such as robust phase estimation (RPE) \cite{belliardo_achieving_2020,ni_low-depth_2023}. The framework also extends naturally to Hamiltonian eigenvalue estimation through quantum eigenvalue transformation circuits \cite{dong_ground-state_2022}. A schematic overview of the approach is shown in Fig.~\ref{fig:main}. Overall, our results show that quantum algorithmic tools can be used not only for computation, but also for engineering phase-estimation protocols through programmable signal design.

\begin{figure}[htbp]
\centering
\includegraphics[width=\linewidth]{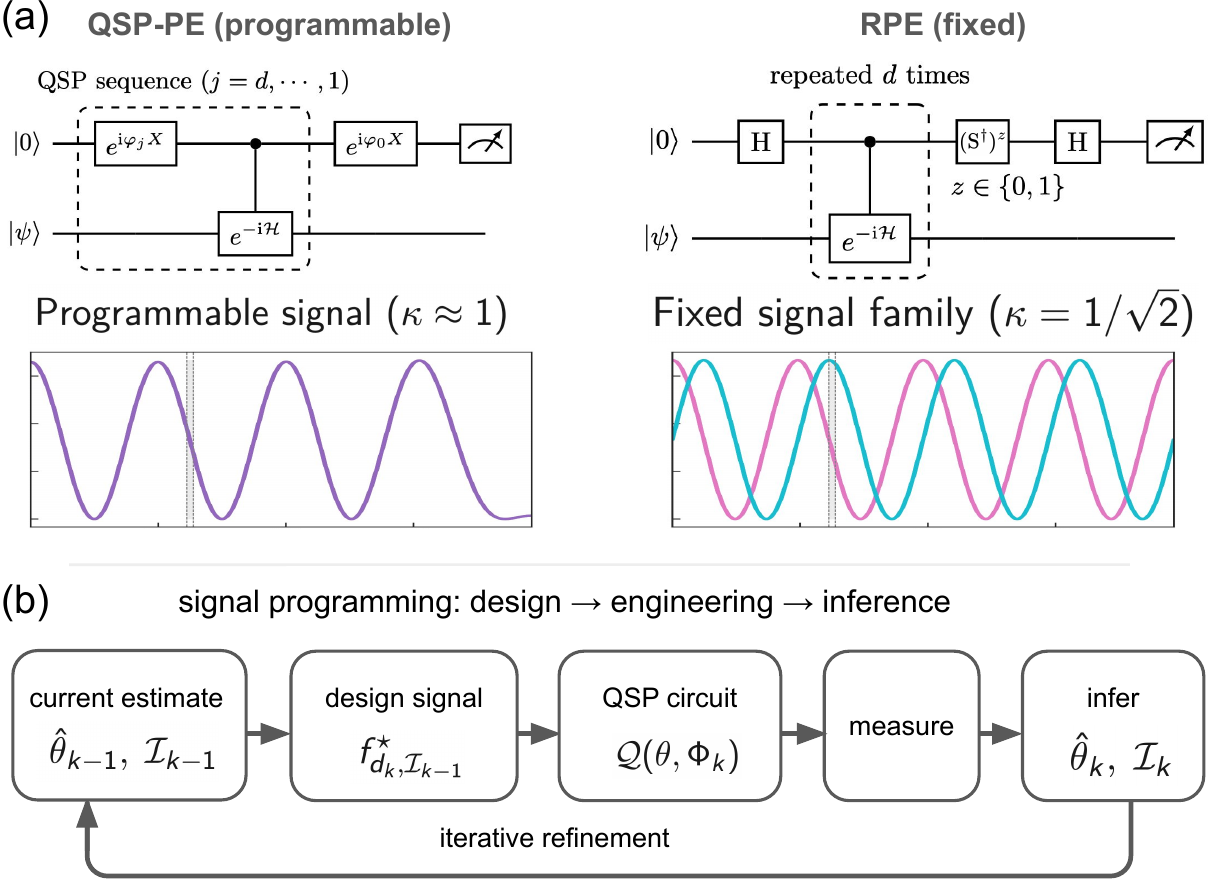}
\caption{Signal programming framework for quantum phase estimation. (a) Comparison between standard phase estimation based on fixed signal families (e.g., robust phase estimation, RPE) and the proposed QSP-based approach. Conventional methods rely on fixed single-mode trigonometric signals, which limit the achievable sensitivity efficiency to $\kappa = 1/\sqrt{2}$. In contrast, QSP enables programmable signal transformations, allowing the signal to be tailored to a local confidence interval and achieving near-optimal sensitivity $\kappa \approx 1$. (b) End-to-end signal programming framework. Starting from a current estimate, the method designs an optimal signal by solving a max-min problem, implements it using a QSP circuit, and performs measurement and classical inference to obtain an updated estimate. This closed-loop procedure iteratively refines the estimate with improved sensitivity.} 
\label{fig:main}
\end{figure}

\section{Signal formulation of phase estimation}

To formalize the above idea, we cast phase estimation as a signal design problem. Let $\theta$ denote the target phase parameter, and let $f(\theta)$ denote the measurement signal produced by a parameter-dependent quantum circuit. In our setting, the signal family is generated by quantum signal processing (QSP), a quantum algorithmic framework for implementing polynomial transformations of the eigenvalues of a unitary operator \cite{low_optimal_2017,gilyen_quantum_2019,dong_ground-state_2022,motlagh_generalized_2024,martyn_grand_2021}. By interleaving queries to the controlled unitary with single-qubit ancilla rotations, QSP produces a broad class of functions of the phase parameter.

A key feature of QSP is its qubitization structure, which yields an effective $\mathrm{SU}(2)$ description of the eigenvalue transformation on a two-dimensional invariant subspace for each eigenvalue \cite{low_hamiltonian_2019,gilyen_quantum_2019}. We therefore first formulate the design problem in this $\mathrm{SU}(2)$ setting, and return to the higher-dimensional lift later. Concretely, given phase factors $\Phi=(\varphi_0,\ldots,\varphi_d)$, the associated QSP sequence is
\begin{equation}
    \mathcal{Q}(\theta,\Phi) = e^{\I \varphi_0 X} \prod_{j=1}^d \left( e^{\I \theta Z} e^{\I \varphi_j X} \right).
\end{equation}
For a circuit of depth $d$, the resulting signal takes the form
\begin{equation}
    f(\theta)=g^2(\theta):=\abs{\braket{0|\mathcal{Q}(\theta,\Phi)|0}}^2,\qquad g\in\mathcal{G}_d,
\end{equation}
where $\mathcal{G}_d$ denotes the class of bounded trigonometric polynomials accessible to QSP at depth $d$. We write $\mathcal{F}_d:=\left\{f:f(\theta)=g^2(\theta),\,g\in\mathcal{G}_d\right\}$ for the corresponding admissible signal class.

The performance of the estimator is controlled by the local behavior of the signal on a confidence interval $\mathcal{I}$ around the target parameter. Such an interval arises naturally in iterative estimation procedures, where each stage produces an estimate with finite uncertainty and thereby defines a local region containing the true parameter with high probability. Intuitively, by local error propagation, the induced parameter uncertainty scales as $\Delta\theta \sim \Delta f / |f'(\theta)|$ on $\mathcal{I}$, so a larger local slope means that the same uncertainty in the measured signal translates into a smaller uncertainty in $\theta$.

This motivates the following design principle: for a given query depth $d$ and uncertainty interval $\mathcal{I}$, choose the signal within the admissible class to maximize the worst-case sensitivity,
\begin{equation}\label{eqn:max-min-design-signal}
    L(f, \mc{I}) := \min_{\theta \in \mathcal{I}} |f'(\theta)|,\quad f^\star_{d,\mathcal{I}}
    = \arg\max_{f \in \mathcal{F}_d}
      L(f, \mc{I}).
\end{equation}
This formulation turns phase estimation into a max-min signal design problem.

To quantify the sensitivity achieved per unit quantum resource, we define the sensitivity efficiency
\begin{equation}
    \kappa(f,\mathcal{I}) := \frac{L(f, \mc{I})}{d},
\end{equation}
where $d$ denotes the query depth. Since the derivative of a degree-$d$ trigonometric polynomial is fundamentally bounded by $d$ through Bernstein's inequality, $\kappa(f,\mathcal{I})$ provides a natural normalized measure of how efficiently a protocol converts coherent query depth into useful local sensitivity over the target interval. Larger $\kappa$ therefore leads to smaller estimation variance at fixed shot number and query depth.

Standard phase-estimation protocols such as Robust Phase Estimation (RPE) \cite{belliardo_achieving_2020,ni_low-depth_2023} are based on fixed trigonometric signals $\left\{(1+\cos(2d\theta))/2,\ (1+\sin(2d\theta))/2\right\}$. These signals correspond to two complementary phase transformations, which together avoid blind spots in sensitivity when either component vanishes. However, this also distributes the available sensitivity across two separate channels, so that each individual signal carries only a fraction of the maximal slope. As a result, the effective sensitivity efficiency is $\kappa_{\mathrm{RPE}} = 1/\sqrt{2}$. By contrast, the versatility of QSP-based signals enables substantially larger uniform slope over the same interval, leading to higher sensitivity efficiency and improved constant factors in estimation precision.

\section{Algorithm and Performance Guarantee}

We develop an iterative phase estimation procedure based on progressive refinement of both the estimator and its associated confidence interval. At each stage, the current interval quantifies the remaining uncertainty of the parameter. Given this interval, we solve the max-min signal design problem in \cref{eqn:max-min-design-signal} to construct an optimized signal beyond standard single-mode trigonometric functions. The resulting signal achieves uniformly high sensitivity over the interval and can be implemented by a QSP circuit, with the quantum enhancement arising from the improved sensitivity efficiency $\kappa$ compared to standard phase-estimation signals. Measurement outcomes are then used to update the estimate and shrink the confidence interval. 

In some phase-estimation methods, the classical post-processing step is formulated as loss minimization and can be hindered by nonconvexity and spurious local minima \cite{neill_accurately_2021,arute_observation_2020}. In contrast, in our QSP-based method, the designed signal is locally monotone on the confidence interval, so the inference problem in post-processing is no longer a generic optimization problem but reduces exactly to one-dimensional root finding. This structured local geometry makes the classical post-processing provably simple and efficient, enabling stable updates via standard methods such as bisection. Repeating this procedure progressively refines the interval and improves the estimation accuracy.

We summarize the iterative QSP-based estimation procedure in the following pseudo-code.

\begin{algorithm}[H]
\caption{QSP-based Phase Estimation (QSP-PE)}
\begin{algorithmic}
\State Initialize a confidence interval $\mathcal{I}_0$ centered at $\hat \theta_0$
\For{$k = 1,2,\dots,K$}
    \State Design a signal $f_k$ by solving the max--min problem over $\mathcal{I}_{k-1}$
    \State Implement $f_k$ using a QSP circuit with query depth $d_k$
    \State Collect $m_k$ measurement samples
    \State Update the estimate $\hat{\theta}_k$ via classical post-processing
    \State Shrink the confidence interval $\mathcal{I}_k$
\EndFor
\State \Return $\hat{\theta}_K$ and $\mc{I}_K$
\end{algorithmic}
\end{algorithm}

The pseudo-code highlights the iterative structure of the procedure, where quantum signal design and classical post-processing are combined to progressively refine the estimate. The performance of this procedure can be characterized in terms of the sensitivity efficiency $\kappa$, which quantifies how effectively quantum circuit depth translates into useful information about the parameter.

\begin{thm}\label{thm:QSP-PE-performance-guarantee}
For any target precision $\epsilon > 0$ and failure probability $\delta \in (0,1)$, the proposed iterative QSP-based phase estimation procedure outputs an estimator $\hat{\theta}$ satisfying $\mathbb{P}(|\hat{\theta}-\theta| \le \epsilon) \ge 1-\delta$.

Moreover, if the employed signal family has sensitivity efficiency $\kappa$, then it suffices to take
$m = \mathcal{O}\!\left(\kappa^{-2}\log\frac{\log(1/\epsilon)}{\delta}\right)$
measurements per stage. Under this condition, the number of refinement stages is $K=\mathcal{O}(\log(1/\epsilon))$, the maximal query depth scales as $\mathcal{O}(1/\epsilon)$, the total query complexity is $\widetilde{\mathcal{O}}\!\left(\frac{1}{\kappa^2\epsilon}\log\frac{1}{\delta}\right)$, and the total classical floating-point complexity is $\mathcal{O}(1/\epsilon)$ in post-processing.
\end{thm}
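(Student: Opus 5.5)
We will prove the guarantee by induction over stages, tracking one invariant: at the start of stage $k$ the true $\theta$ lies in $\mathcal{I}_{k-1}$, whose half-width is $r_{k-1}=r_0\gamma^{k-1}$ for a fixed contraction factor $\gamma\in(0,1)$ (say $\gamma=1/2$). The depth is chosen as $d_k\asymp c/r_{k-1}$, with $c$ small enough that a monotone signal of slope $\ge \kappa d_k$ on $\mathcal{I}_{k-1}$ exists in $\mathcal{F}_{d_k}$.

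\textbf{Step 1: one stage.} By the definition of $\kappa$, the designed signal $f_k$ satisfies $|f_k'(\theta)|\ge \kappa d_k$ for all $\theta\in\mathcal{I}_{k-1}$. Since $f_k'$ is continuous and nonvanishing there, $f_k$ is strictly monotone on the interval. Each shot is a Bernoulli variable with mean $f_k(\theta)$. Let $\hat p_k$ be the empirical mean of the $m$ shots. Hoeffding's inequality gives $|\hat p_k-f_k(\theta)|\le \eta$ with probability at least $1-2e^{-2m\eta^2}$.

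\textbf{Step 2: inversion.} We define $\hat\theta_k$ as the root of $f_k(\vartheta)=\hat p_k$ on $\mathcal{I}_{k-1}$, clipping $\hat p_k$ to the range $f_k(\mathcal{I}_{k-1})$ if it falls outside. Clipping only moves $\hat p_k$ closer to $f_k(\theta)$, so the bound from Step 1 still holds. The mean value theorem together with monotonicity then gives
\[
|\hat\theta_k-\theta|\le \frac{\eta}{\kappa d_k}.
\]
We set $\eta=\kappa d_k\gamma r_{k-1}/2$. Because $d_k r_{k-1}\asymp c$, this means $\eta\asymp \kappa$, and the per-stage failure probability becomes $2e^{-\Omega(m\kappa^2)}$. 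We then take $\mathcal{I}_k=[\hat\theta_k-\gamma r_{k-1},\,\hat\theta_k+\gamma r_{k-1}]$. On the success event this interval contains $\theta$, so the invariant is preserved.

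\textbf{Step 3: assembly.} We stop when $r_K\le\epsilon$, which gives $K=\lceil\log_{1/\gamma}(r_0/\epsilon)\rceil=\mathcal{O}(\log(1/\epsilon))$. A union bound over the $K$ stages requires $2Ke^{-\Omega(m\kappa^2)}\le\delta$. This holds for $m=\mathcal{O}(\kappa^{-2}\log(K/\delta))$, which is the stated per-stage shot count. The maximal depth is $d_K\asymp c/r_{K-1}=\mathcal{O}(1/\epsilon)$. The total query count is a geometric sum:
\[
m\sum_k d_k=\mathcal{O}\!\left(\frac{m}{\epsilon}\right)=\widetilde{\mathcal{O}}\!\left(\frac{1}{\kappa^2\epsilon}\log\frac{1}{\delta}\right).
\]
For the classical cost, stage $k$ uses bisection to precision $\mathcal{O}(r_k)$ relative to $r_{k-1}$, which takes $\mathcal{O}(1)$ iterations. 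Each iteration evaluates a trigonometric polynomial of degree $d_k$ at cost $\mathcal{O}(d_k)$, for example via the $\mathrm{SU}(2)$ product. Summing gives $\sum_k\mathcal{O}(d_k)=\mathcal{O}(1/\epsilon)$. Signal design is treated as precomputable: by a rescaling $\theta\mapsto d\theta$ the design problem is essentially the same at every stage.

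\textbf{Main obstacle.} The delicate point is justifying that a $\kappa$ bounded away from zero is available uniformly across stages. Concretely, we need a fixed ratio $c=d_k r_{k-1}$ such that $\max_{f\in\mathcal{F}_{d_k}}L(f,\mathcal{I}_{k-1})\ge\kappa d_k$, independently of $k$. I would first exhibit an explicit witness, $f=(1+\sin(2d(\vartheta-\hat\theta)))/2$ restricted to $|\vartheta-\hat\theta|\le c/d$ with $c<\pi/4$. Its slope is $d\cos(2c)\ge d\cos(2c)$ across the interval, so $\kappa\ge\cos(2c)$ is achievable. This shows the hypothesis is non-vacuous and that all constants are $k$-independent. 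The theorem's dependence on $\kappa$ then enters only through $\eta$, that is, through $m$.
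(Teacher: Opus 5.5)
Your Steps 1--3 follow the paper's proof essentially step for step: Hoeffding on the shot average, the $1/L$-Lipschitz clipped inverse of the locally monotone signal (the paper shows its constrained least-squares map $h_{k+1}$ is exactly this clipped inverse), a chained union bound with per-stage failure $\delta/K$, a geometric depth schedule $d_k\propto 1/r_{k-1}$, and $\mathcal{O}(1)$ bisection steps of cost $\mathcal{O}(d_k)$ per stage. Your factor-$2$ slack between the statistical error $\gamma r_{k-1}/2$ and the new radius $\gamma r_{k-1}$ also cleanly absorbs the bisection error, which the paper does not explicitly budget for.

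The ``main obstacle'' paragraph, however, is wrong and should be dropped.

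\textbf{The witness is not in $\mathcal{F}_d$.} Every $g\in\mathcal{G}_d$ is a cosine series of parity $d \bmod 2$. So every $f=g^2\in\mathcal{F}_d$ satisfies $f(-\vartheta)=f(\vartheta)$ and $f(\pi-\vartheta)=f(\vartheta)$. Your witness $(1+\sin(2d(\vartheta-\hat\theta)))/2$ is not even unless $\cos(2d\hat\theta)=0$, so it does not lie in $\mathcal{F}_d$. Restricting it to $\mathcal{I}$ does not help, because trigonometric polynomials that agree on an interval agree everywhere.

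\textbf{No uniform $\kappa$ exists in this class.} The symmetries force $f'(0)=f'(\pi/2)=f'(\pi)=0$ for every $f\in\mathcal{F}_d$. Hence $L(f,\mathcal{I})=0$ whenever $\mathcal{I}$ contains one of these points. Bernstein's inequality gives $|f''|\le 2d^2$, so you also get $\kappa\le 2d\rho$ whenever $\mathcal{I}$ comes within distance $\rho$ of such a point. Thus no $\hat\theta$-uniform bound $\kappa\ge\cos(2c)$ holds for $\mathcal{F}_d$. The shifted-sine slope requires a different circuit, e.g.\ the $|0\rangle\to|+\rangle$ measurement of the alternative-basis section combined with a phase shift of the evolution.

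This does not invalidate your proof of the stated theorem. The statement, like the paper's proof (which simply posits $\kappa\approx L_k/d_k$), takes $\kappa$ as a hypothesis. But you should not claim to have verified that hypothesis.

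Similarly, your remark that signal design is stage-independent under $\vartheta\mapsto d\vartheta$ is unjustified: the center $\hat\theta$ moves and $\mathcal{G}_d$ is not translation invariant. It is also unnecessary, since the claimed $\mathcal{O}(1/\epsilon)$ bound counts only post-processing.
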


The proof of the theorem is given in \cref{app:sec:resource_analysis}. This theorem shows that the complexity of our algorithm achieves Heisenberg-limited scaling. Moreover, by optimizing the sensitivity efficiency $\kappa$, the overall cost can be further reduced. Owing to the local monotonicity of the signal, the post-processing step remains computationally efficient and avoids the need for black-box optimization.

When the number of measurement shots exceeds the minimal requirement in the theorem, the confidence interval can be further refined. In the baseline setting corresponding to the minimal sample size, the confidence radius is $r_k = 1/(4d_{k+1})$. For a larger number of measurements, the improved statistical concentration leads to a more accurate characterization of the signal, allowing the interval to shrink proportionally. More precisely, if $m_{\mathrm{actual}} = \zeta^2 \, m_{\mathrm{minimal}}$ for some $\zeta \ge 1$, then the confidence radius can be reduced to $r_k = 1/(4 \zeta d_{k+1})$. This refinement leads to a narrower working interval for signal design, which in turn enables higher achievable sensitivity.

\section{Solving the Max-Min Signal Design Problem}

Directly solving the optimal design problem in \cref{eqn:max-min-design-signal} is numerically challenging due to the nonconvexity induced by the quadratic relation $f(\theta)=g^2(\theta)$, where $g$ is a QSP-admissible polynomial. To address this, we introduce a relaxed optimization problem that is much easier to solve in practice. The key observation is that $f'(\theta)=2g(\theta)g'(\theta)$. Therefore, if both $|g(\theta)|$ and $|g'(\theta)|$ are uniformly bounded away from zero on the confidence interval, then the resulting signal $f(\theta)=g^2(\theta)$ also has large sensitivity on that interval.

Motivated by this observation, we replace the original max-min problem by a coupled optimization over the unsquared transformation function $g \in \mathcal{G}_d$. For a given lower-bound parameter $\alpha$ on the magnitude of $g$, we consider the relaxed problem, where an auxiliary variable $\beta$ captures the minimal derivative magnitude on teh confidence interval:
\begin{equation}
\begin{array}{ll}
    \mathrm{maximize} \;\; &  \beta\\
    \mathrm{subject\ to} \;\; &
    g \in \mathcal{G}_d,\\
    & |g(\theta)| \ge \alpha,\ 
    |g'(\theta)| \ge \beta,
    \; \theta \in \mathcal{I}.
\end{array}
\end{equation}
This relaxed problem can be solved efficiently by linear programming after discretizing the interval constraints \cite{boyd2004convex,gurobi}, and in practice provides high-quality signals for the original design objective.

\section{Numerical Results}

We now evaluate the performance of the proposed approach and compare it with standard phase-estimation methods. The key quantity is the sensitivity efficiency $\kappa$, which determines how effectively query depth translates into useful information about the parameter.

\begin{figure}[htbp]
\centering
\includegraphics[width=\linewidth]{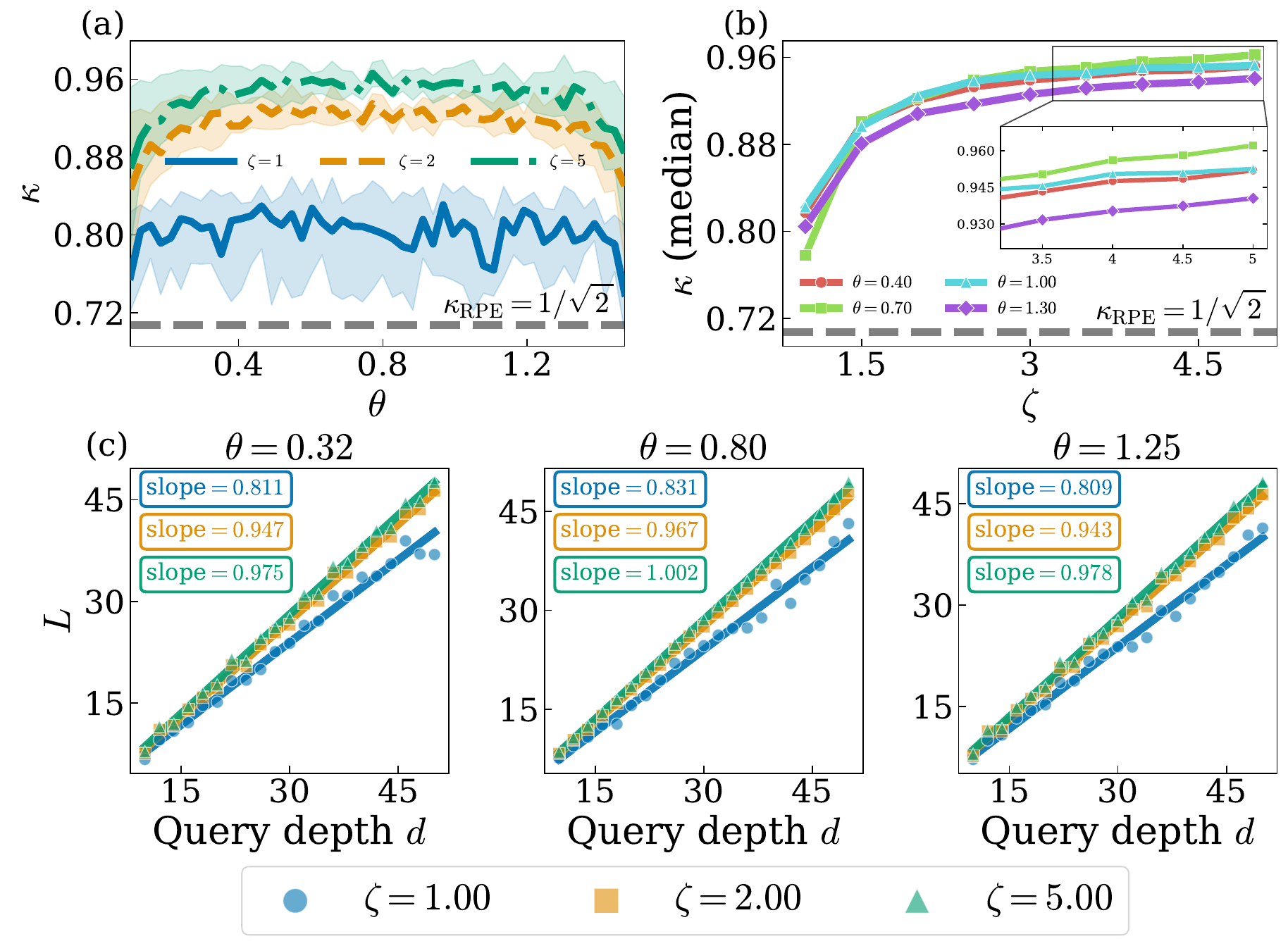}
\caption{(a) Sensitivity efficiency $\kappa$ as a function of target phase $\theta$. $\kappa$ improves uniformly as $\zeta$ increases across all tested values of $\theta$. The shaded region depicts the 10\%-90\% percentile range of $\kappa$. The sensitivity efficiency of the optimal design consistently exceeds that of RPE, indicated by the dashed line at $\kappa_{\text{RPE}} = 1/\sqrt{2}$. (b) Sensitivity efficiency $\kappa$ as a function of shrinkage factor $\zeta$. $\kappa$ increases monotonically with $\zeta$, and this improvement is uniform across all target phases $\theta_0$. (c) Minimum derivative lower bound $L$ as a function of query depth $d$, evaluated at three target phases $\theta \in \{0.32, 0.80, 1.25\}$ and shrinkage factors $\zeta \in \{1, 2, 5\}$. For each combination of $\theta$ and $\zeta$, $L$ is plotted against $d$ and fit with a linear regression (solid lines). The fitted slopes increase toward $1$ as $\zeta$ grows.} 
\label{fig:kappa-theta-zeta-L-d-regression}
\end{figure}

\cref{fig:kappa-theta-zeta-L-d-regression}(a) shows that the optimized QSP-based signals achieve significantly higher sensitivity efficiency compared to the fixed trigonometric signals used in robust phase estimation (RPE). The improvement is consistent across different values of the target phase, indicating that programmable signal design enables a systematic enhancement beyond standard Hadamard-test-based circuit constructions. \cref{fig:kappa-theta-zeta-L-d-regression}(b) further demonstrates that increasing the measurement budget, equivalently increasing $\zeta$ parameter, leads to higher achievable sensitivity. This is because a larger number of measurements allows for a more refined confidence interval, enabling signal design over a narrower domain.

The scaling study is shown in \cref{fig:kappa-theta-zeta-L-d-regression}(c), where the worst-case derivative $L$ is plotted against the query depth $d$. Across different target phases, the numerical results exhibit a linear relation, with slope agreeing with the sensitivity efficiency $\kappa$. This confirms that the achievable sensitivity is well characterized by the scaling relation $L \approx \kappa d$, and that $\kappa$ remains approximately consistent over depths. The observed linearity across different values of the target phase also indicates that the method is broadly applicable rather than restricted to isolated parameter regimes. Moreover, as $\zeta$ increases, the slope becomes larger and approaches $1$, showing that a narrower confidence interval enables a systematic improvement in $\kappa$.

\begin{figure}[htbp]
\centering
\includegraphics[width=\linewidth]{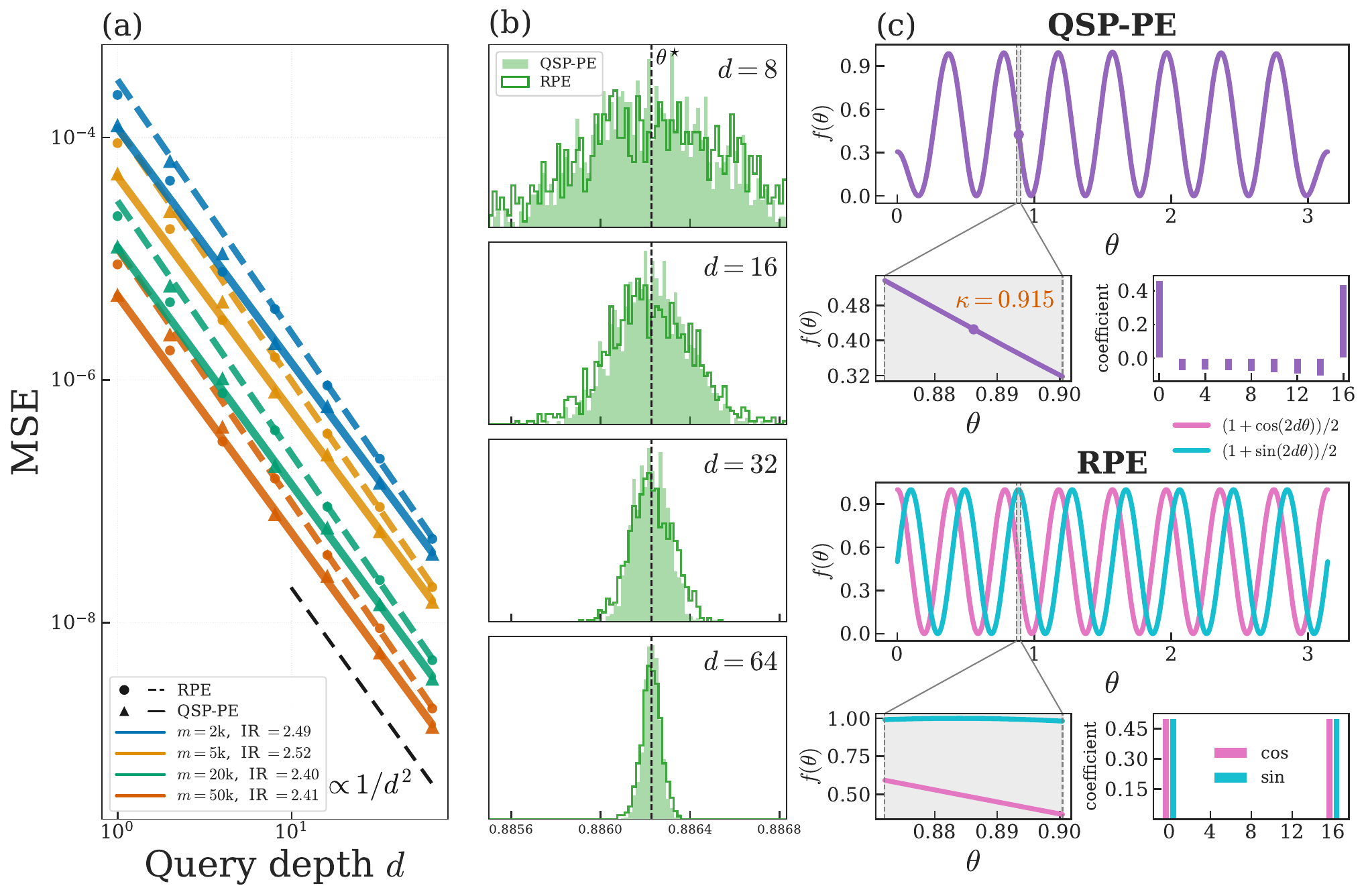}
\caption{(a) Numerical comparison of the MSE of the $\hat{\theta}$ estimator for QSP-PE and RPE across query depth $d$, where the target phase is $\theta^{\star} = \sqrt{\pi}/2$. Linear regression is applied to the empirical MSE at depths $d = 8, 16, 32, 64$, and the improvement ratio (IR) is defined as the MSE ratio of RPE to QSP-PE derived from the regression intercepts. Across measurement budgets from $2{,}000$ to $50{,}000$ shots, the average IR is ${\sim}2.46$, consistent with the theoretical prediction. (b) Distributions of the $\hat{\theta}$ estimator from QSP-PE and RPE at query depths $d = 8, 16, 32, 64$ for $50{,}000$ shots with $\zeta=\sqrt{10}$. At each depth, QSP-PE estimates are more tightly concentrated around $\theta^{\star}$, directly accounting for the lower MSE observed in (a). (c) Optimized signal functions $f(\theta)$ of QSP-PE (top) and RPE (bottom), with their corresponding Fourier mode decompositions shown at the lower right. The QSP-PE signal achieves near-optimal sensitivity efficiency $\kappa = 0.915$ and is considerably more sensitive than the single-mode trigonometric signals used by RPE, yielding a signal function with richer structure and a more accurate estimation.} 
\label{fig:main-figure}
\end{figure}

We further evaluate the end-to-end estimation performance and compare it with RPE. As shown in \cref{fig:main-figure}(a), the proposed QSP-based method achieves a consistently lower mean squared error (MSE), with a reduction by approximately a factor of two across different query depths. This factor-of-two gain is consistent with the sensitivity scaling: RPE has $\kappa_{\mathrm{RPE}}=1/\sqrt{2}$, while the optimized QSP signals approach $\kappa\approx 1$, and since the estimation variance scales as $\sim 1/\kappa^2$, the expected reduction is of order two.

The distribution of the estimator is shown in \cref{fig:main-figure}(b). Compared to RPE, the QSP-based estimator exhibits a more concentrated distribution around the true parameter, indicating reduced variance and improved statistical efficiency. This confirms that the gain in sensitivity directly translates into improved estimation accuracy at the level of the output distribution.

Finally, \cref{fig:main-figure}(c) illustrates the structure of the underlying signal functions. In contrast to standard approaches that rely on fixed single-mode trigonometric signals and require complementary sine and cosine transformations, QSP enables more versatile signal transformations with richer modal structure. This flexibility allows the signal to concentrate sensitivity within the relevant region of interest. Moreover, the proposed method relies on a single optimized signal rather than combining complementary signals, thereby avoiding redundancy and improving overall efficiency.

\section{Generalization to High-Dimensional Systems}

Although our analysis is carried out in the $\mathrm{SU}(2)$ setting, this reduction is a feature of QSP rather than a limitation. The qubitization structure establishes a direct correspondence between high-dimensional quantum eigenvalue transformations and an effective two-dimensional invariant subspace, enabling a simplified characterization of signal transformations. The resulting construction lifts naturally to the high-dimensional setting (see circuit in \cref{fig:main}(a)). The corresponding signal admits an explicit representation of the form $f(\mc{H}) = \braket{ \psi | g^2(\mc{H}/2) | \psi }$, where $\mc{H}$ is the underlying Hamiltonian queried through time evolution operator, which can be implemented via Trotterization \cite{childs_theory_2021}. This shows that the proposed framework is not restricted to low-dimensional problems, but applies broadly to eigenvalue estimation in general quantum systems.

Consequently, our tractable analysis and the resulting algorithms extend directly to high-dimensional settings. This allows our approach to be applied both to small-scale tasks such as gate learning, and to large-scale problems such as Hamiltonian eigenvalue estimation.

\section{Conclusions and outlook}

In this work, we introduce a new approach to phase estimation based on programmable signal design. By leveraging the flexibility of quantum signal processing, we design measurement signals with enhanced sensitivity and develop a co-designed framework that integrates quantum circuit construction with classical inference. This leads to improved sensitivity efficiency and reduced resource cost compared to standard phase-estimation methods.

Our analysis establishes a direct connection between signal design and estimation performance through the sensitivity efficiency $\kappa$, and demonstrates how optimized signal transformations translate into practical gains in precision. Numerical results confirm these improvements and illustrate the underlying mechanisms.

Beyond the specific setting studied here, our results highlight the broader potential of quantum algorithmic tools for shaping quantum sensing protocols. This perspective opens new directions for designing quantum algorithms that directly optimize information extraction in estimation and metrology tasks.

\begin{acknowledgments}
The authors thank Zhiyan Ding and Xiaodi Wu for helpful discussions. S.L. was supported by the Air Force Office of Scientific Research under award number FA9550-21-1-0209.
\end{acknowledgments}

\section*{Data availability}

The data that support the findings of this article are openly available at \footnote{\url{https://github.com/dongsnaq/QPE-Phase-Estimation}}.

\bibliographystyle{plainnat}
% \bibliography{ref_dong,ref}

\newpage 
\clearpage
\appendix
\thispagestyle{empty}
\onecolumngrid

%TC:ignore

 \begin{center}
     {\Large \bf Supplementary Material for\\
     Programmable Signal Design for Quantum Phase Estimation via Quantum Signal Processing}
 \end{center}
\begin{center}
Zikang Jia,$^1$ Suying Liu,$^{2,3}$, and Yulong Dong$^{4}$\\
\smallskip
\small{\emph{$^1$\UMMath\\$^2$\UMDCS\\$^3$\QUICS\\$^4$\UMEECS}}\\
(Dated: \today)
\end{center}

\setcounter{equation}{0}
\setcounter{figure}{0}
\setcounter{table}{0}
\setcounter{page}{1}
\setcounter{section}{0}
\setcounter{algorithm}{0}
\setcounter{secnumdepth}{3}

\makeatletter
\@removefromreset{equation}{section}
\@removefromreset{figure}{section}
\@removefromreset{table}{section}

\renewcommand{\theequation}{S\arabic{equation}}
\renewcommand{\thefigure}{S\arabic{figure}}
\renewcommand{\thealgorithm}{S\arabic{algorithm}}
\renewcommand{\thesection}{S\arabic{section}}
\renewcommand{\bibnumfmt}[1]{[S#1]}
\makeatother

\section{Signal Model and Preliminaries}

This section introduces the signal model and setup used throughout the appendix. We first review the signal representation arising from the Hadamard-test-based phase estimation, which serves as the standard reference point. We then describe how quantum signal processing (QSP) modifies this signal through programmable transformation functions, and finally formalize the resulting signal class and optimization problem for QSP-based phase estimation.

\subsection{Hadamard-test-based quantum phase estimation and its signal representation}

Standard methods for quantum phase estimation (QPE), as illustrated by the circuit in \cref{fig:main}, are based on the Hadamard test. In this construction, repeated queries to the Hamiltonian time-evolution operator are sandwiched between Hadamard gates acting on an ancilla qubit, up to an additional phase gate for estimating the imaginary component. The corresponding measurement probability in the general $n$-qubit setting is given in \cref{sec:high-d-extension}.

To expose the underlying signal structure, consider the case where the input state is an eigenstate $\ket{\psi}$ of the Hamiltonian, satisfying $\mathcal{H}\ket{\psi}=\lambda\ket{\psi}$. One can then define a qubitized subspace $\mathcal{S}_\lambda := \mathrm{span}\{\ket{0}\ket{\psi},\,\ket{1}\ket{\psi}\}$. Restricted to this subspace, the action of the controlled unitary reduces to an effective $Z$ rotation. Consequently, an effective $\mathrm{SU}(2)$ model is obtained as
\begin{equation}
    \mc{W}_{\Re}(\theta) = \mathrm{H} e^{\I d \theta Z} \mathrm{H},\quad \mc{W}_{\Im}(\theta) = \mathrm{H} \mathrm{S}^\dagger e^{\I d \theta Z} \mathrm{H}.
\end{equation}
A direct calculation gives the corresponding measurement probabilities
\begin{equation}
    f_{\Re}(\theta)=\left|\braket{0|\mathcal{W}_{\Re}(\theta)|0}\right|^2=\frac{1}{2}\bigl(1+\cos(2d\theta)\bigr), \quad f_{\Im}(\theta)=\left|\braket{0|\mathcal{W}_{\Im}(\theta)|0}\right|^2=\frac{1}{2}\bigl(1+\sin(2d\theta)\bigr).
\end{equation}

Since the estimation variance depends quadratically on the local derivative, the effective sensitivity of this paired signal family is naturally characterized by the root-mean-square average
\begin{equation}
    \kappa_{\mathrm{eff}}:=\frac{1}{d}\sqrt{\frac{\bigl(f'_{\Re}(\theta)\bigr)^2+\bigl(f'_{\Im}(\theta)\bigr)^2}{2}}=\frac{1}{\sqrt{2}}.
\end{equation}

The complementary cosine and sine signals avoid blind spots in sensitivity: although each individual signal can have vanishing derivative in certain parameter regimes, the paired signal family maintains a uniform effective sensitivity. This makes Hadamard-test-based signals a standard choice in many QPE algorithms. Methods such as Robust Phase Estimation (RPE) build on this fixed signal family and progressively refine the estimate to achieve Heisenberg-limited scaling.

At the same time, the effective sensitivity efficiency remains bounded by the universal constant $1/\sqrt{2}$. More importantly, the signal family itself is fixed throughout the iterative estimation process, so prior information from earlier stages is not used to adapt the signal. As a result, a constant gap from the optimal value is preserved due to the use of a fixed complementary signal pair. This naturally raises the question of whether one can replace this redundant fixed construction by a single adaptive signal tailored to the current uncertainty region. The rest of this Supplementary Material addresses this question.

\subsection{Transformation functions of quantum signal processing based phase estimation (QSP-PE)}

QSP is a powerful quantum algorithmic primitive for performing universal function transformations of an input Hamiltonian \cite{low_optimal_2017,gilyen_quantum_2019,martyn_grand_2021}. The algorithm interleaves oracles to the Hamiltonian with a sequence of phase-modulation gates whose angles encode the desired transformation. A key feature of QSP is that these angles depend only on the target transformation function and not on the specific Hamiltonian oracle. As a result, even for an arbitrarily high-dimensional Hamiltonian matrix, the essential QSP structure can be reduced to an approximation theorem in $\mathrm{SU}(2)$, where interleaved $Z$- and $X$-rotations carry the information of Hamiltonian eigenvalues and the transformation function, respectively in some invariant subspace \cite{gilyen_quantum_2019,wang_energy_2022}.

In the rest of this section, we first work with the $\mathrm{SU}(2)$ model of QSP, known as qubitized form, and show how to design optimal signals for phase estimation. In a later section, we discuss how to lift these qubitized constructions to arbitrary-dimensional Hamiltonians.

\begin{defn}[$\mathrm{SU}(2)$ QSP model ($Z$ convention)]\label{def:qsp_model}
Let $\Phi := (\varphi_0,\varphi_1,\dots,\varphi_d) \in \RR^{d+1}$ be a sequence of phase factors. For any real $\theta \in [0,\pi]$, the QSP sequence is defined as
\begin{equation}
    \mc{Q}(\theta,\Phi)
    = e^{\I \varphi_0 X} \prod_{j = 1}^d \big( e^{\I \theta Z} e^{\I \varphi_j X} \big).
\end{equation}
We say that a set of phase factors $\Phi$ is \emph{Z-symmetric} if it satisfies $\varphi_d = \varphi_0 + \pi / 2$ and $\varphi_j = \varphi_{d-j}$ for all $1 \le j \le d - 1$.
\end{defn}

The heart of QSP is the following representation theorem, which shows that the product $\mc{Q}(\theta,\Phi)$ generates a functional path in $\theta$ with universal expressive power.

\begin{thm}[Representation of QSP]\label{thm:qsp_rep}
Suppose $\Phi \in \RR^{d + 1}$ is a Z-symmetric set of phase factors. Then there exist real polynomials $u,v,w$ such that
\begin{equation}
    \mc{Q}(\theta,\Phi)
    = \begin{pmatrix}
        u(\cos\theta) &
        - \sin\theta\, w(\cos\theta) + \I v(\cos\theta) \\
        \sin\theta\, w(\cos\theta) + \I v(\cos\theta) &
        u(\cos\theta)
      \end{pmatrix}.
\end{equation}
These polynomials satisfy:
\begin{enumerate}
    \item[] (A1) Degree: $u$ and $v$ have degree at most $d$, and $w$ has degree at most $d-1$.
    \item[] (A2) Parity: $u$ and $v$ have parity $d \bmod 2$, and $w$ has parity $(d-1) \bmod 2$.
    \item[] (A3) Normalization: for all $\theta \in [0,\pi]$,
    \begin{equation}
        u^2(\cos\theta) + v^2(\cos\theta) + \sin^2\theta\, w^2(\cos\theta) = 1.
    \end{equation}
\end{enumerate}
Conversely, given an integer $d$ and a real polynomial $u$ that satisfies:
\begin{enumerate}
    \item[](B1) Degree: $\deg(u) = d$,
    \item[](B2) Parity: $u$ has parity $d \bmod 2$,
    \item[](B3) Boundedness: $\abs{u(\cos\theta)} \le 1$ for all $\theta \in [0,\pi]$,
\end{enumerate}
there exists a Z-symmetric set of phase factors $\Phi$ such that
\begin{equation}\label{eq:0to0_signal}
    \braket{0 | \mc{Q}(\theta,\Phi) | 0} = u(\cos\theta)
\end{equation}
for all $\theta \in [0,\pi]$.
\end{thm}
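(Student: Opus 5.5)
We prove the forward and converse directions of \cref{thm:qsp_rep} separately. We work with the building blocks $e^{\I\theta Z}$ and $e^{\I\varphi X}$ written as $2\times 2$ matrices whose entries are polynomials in $\cos\theta$ and $\sin\theta$.

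\textbf{Forward direction.} We argue by induction on $d$, tracking a general (not yet symmetric) product $e^{\I\varphi_0X}\prod_{j=1}^d(e^{\I\theta Z}e^{\I\varphi_jX})$. The inductive claim is that it has the form
\begin{equation}
\begin{pmatrix} P & \I Q \\ \I Q^* & P^*\end{pmatrix}
\end{equation}
where $P$ and $Q$ are Laurent polynomials in $e^{\I\theta}$ of degree at most $d$ and of parity $d \bmod 2$. Concretely, $P = p_1(\cos\theta) + \I\sin\theta\, p_2(\cos\theta)$ with $\deg p_1 \le d$, $\deg p_2 \le d-1$ and the stated parities, and $Q$ has the analogous form. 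Multiplying by one more factor $e^{\I\theta Z}e^{\I\varphi X}$ raises the degree by one and flips the parity, which gives (A1) and (A2). Unitarity of the product gives $|P|^2+|Q|^2=1$, which is (A3) once the structure is reduced.

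The Z-symmetry condition is used only to collapse the general form to the displayed one. We would show that $\varphi_j=\varphi_{d-j}$ together with $\varphi_d=\varphi_0+\pi/2$ makes the transpose (or a conjugation by $X$) of the sequence relate to the reversed sequence. The consequence we want is that the diagonal entries are equal and real, so $u=\Re P$ and $\Im P=0$. It should also force $\Re Q$ to have the form $v(\cos\theta)$ and $\Im Q$ the form $\sin\theta\,w(\cos\theta)$. This symmetry bookkeeping, particularly the $\pi/2$ offset turning $e^{\I\varphi_dX}$ into $\I X e^{\I\varphi_0X}$, is the step to check carefully. It is finite algebra, but error-prone.

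\textbf{Converse direction.} Given $u$ satisfying (B1)--(B3), we first complete it. Since $1-u^2(x)\ge 0$ on $[-1,1]$ and is even, we can use the standard sum-of-squares factorization on the interval (Fej\'er--Riesz applied to the Laurent polynomial in $e^{\I\theta}$) to find real $v,w$ with the required degrees and parities such that $u^2+v^2+(1-x^2)w^2=1$. Matching parities requires splitting the Fej\'er--Riesz factor into its real and imaginary parts.

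Next we perform layer stripping. Form the unitary-valued matrix $U(\theta)$ with entries built from $u,v,w$. We show that there is a $\varphi$ such that $U(\theta)\,e^{-\I\varphi X}e^{-\I\theta Z}$ has degree $d-1$. This holds because the leading coefficients cancel, which follows from the normalization identity forcing the top-degree coefficients of $P$ and $Q$ to satisfy a relation. By induction this yields phases $\Phi$ such that $\mc{Q}(\theta,\Phi)=U(\theta)$.

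Finally we must show that the phases can be taken Z-symmetric. The plan is to note that the reversed phase sequence produces a related matrix. Because $U$ was constructed with equal real diagonal entries, uniqueness of the stripping (up to the $\pi/2$ convention) forces $\varphi_j=\varphi_{d-j}$. Alternatively, we can cite the symmetric-QSP existence results of \cite{gilyen_quantum_2019,wang_energy_2022}.

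\textbf{Main obstacle.} The completion step with the correct parity and degree constraints, together with the symmetric choice of phases, is the main obstacle. If needed we would invoke the known results in \cite{gilyen_quantum_2019,wang_energy_2022} rather than reprove them.
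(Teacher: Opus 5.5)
Your plan is sound in outline, but it follows a genuinely different route from the paper. The paper proves nothing from scratch. For a palindromic $X$-convention sequence $\Psi$ it sets $\Phi=(\psi_0-\pi/4,\psi_1,\dots,\psi_{d-1},\psi_d+\pi/4)$ and uses $\mathcal{Q}(\theta,\Phi)=\mathrm{H}e^{-\I\pi Z/4}\mathcal{X}(\theta,\Psi)e^{\I\pi Z/4}\mathrm{H}$. This is a bijection between palindromic and Z-symmetric sequences, and it is exactly where the $\pi/2$ offset comes from. It also maps the known symmetric form $\bigl(\begin{smallmatrix} u+\I v & \I\sin\theta\,w\\ \I\sin\theta\,w & u-\I v\end{smallmatrix}\bigr)$ onto the displayed one, so both directions are imported from the symmetric-QSP literature. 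You instead work natively in the $Z$-convention. That makes the argument self-contained and explains Z-symmetry intrinsically, but it forces you to redo completion and layer stripping. The paper's reduction is short and also gives the practical recipe: compute $\Psi$ with \textsf{QSPPACK}, then shift the edge phases. Note that your fallback of simply citing the literature does not suffice without this conjugation identity, because those results concern palindromic $X$-convention phases.

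Several steps in your sketch need the right justification.

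\emph{Forward direction.} The step you flag is short. The identities $e^{\I(\varphi_0+\pi/2)X}=e^{\I\varphi_0X}\,\I X$ and $Xe^{\I\theta Z}X=e^{-\I\theta Z}$ give $\mathcal{Q}(\theta,\Phi)^T=\mathcal{Q}(-\theta,\Phi)$, so $P$ is even in $\theta$ and $p_2=0$. The off-diagonal form needs no symmetry at all. It follows, together with the realness of the polynomials in $P$ and $Q$, from the fact that all Laurent coefficients stay real.

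\emph{Completion.} An arbitrary Fej\'er--Riesz factor need not have parity $d \bmod 2$, since its roots can be selected asymmetrically under $z\mapsto -z$. Splitting the factor into real and imaginary parts does not repair this. Instead, factor the even-frequency polynomial $1-u^2(\cos\theta)$ as $|h(e^{2\I\theta})|^2$ with $h$ real of degree at most $d$, and set $v+\I\sin\theta\,w:=e^{-\I d\theta}h(e^{2\I\theta})$.

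\emph{Layer stripping.} A real stripping angle exists because real coefficients give $C_{-k}=XC_kX$. Unitarity then makes $C_k$ rank one, with kernel spanned by some $(-\I\sin\varphi,\cos\varphi)^T$.

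\emph{Symmetry of the phases.} Your uniqueness argument works only because (B1) forces each partial product $U_k$ to have exact degree $k$, since $\deg U_k\le\deg U_{k-1}+1$. Even then the angles are unique only modulo $\pi$. Symmetrizing therefore yields a Z-symmetric $\tilde\Phi$ with $\mathcal{Q}(\theta,\tilde\Phi)=\pm U$. In the minus case, shift both $\varphi_0$ and $\varphi_d$ by $\pi/2$. This turns $-U$ into $XUX$, keeps the $(0,0)$ entry equal to $u$, and preserves Z-symmetry.
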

We remark that there are two common conventions in the literature. The $Z$-convention described above is typically used in algorithms based on Hamiltonian time evolution, such as QETU and Generalized QSP (GQSP). Alternatively, the standard QSP form is often written in the $X$-convention, where the $\theta$-dependent rotation acts about the $X$-axis rather than the $Z$-axis. These two conventions are equivalent, since $X$ and $Z$ rotations are exchanged under conjugation by the Hadamard gate.

\begin{proof}
Based on the discussion above, it suffices to construct a mapping between the $Z$-convention QSP and that in the standard $X$ convention. Then, the main claims in the theorem follow immediately from the results in the standard literature \cite{gilyen_quantum_2019,wang_energy_2022}.

In the $X$ convention, the set of phase factors $\Psi = (\psi_0, \psi_1, \cdots, \psi_d)$ is \emph{symmetric} in the sense that $\psi_j = \psi_{d - j}$ for any $j = 0, 1, \cdots, d$. According to the representation of QSP in the $X$ convention, for any set of symmetric phase factors $\Psi \in \RR^{d+1}$, there exist three real polynomials $u, v, w$ satisfying conditions (A1-3) such that the following result holds
\begin{equation}
    \mathcal{X}(\theta, \Psi) := e^{\I \psi_0 Z} \prod_{j = 1}^d \big( e^{\I \theta X} e^{\I \psi_j Z} \big) = \begin{pmatrix}
    u(\cos\theta) + \I v(\cos\theta) &
    \I \sin\theta\, w(\cos\theta) \\
    \I \sin\theta\, w(\cos\theta) &
    u(\cos\theta) - \I v(\cos\theta)
  \end{pmatrix}.
\end{equation}
Now, we consider the following adjusted set of phase factors
\begin{equation}\label{eqn:Z-X-phase-factor-relation}
    \Phi = (\varphi_0, \varphi_1, \cdots, \varphi_{d-1}, \varphi_d) = (\psi_0 - \pi/4, \psi_1, \cdots, \psi_{d-1}, \psi_d + \pi/4).
\end{equation}
Then, we have
\begin{equation}
    \begin{split}
        \mc{Q}(\theta, \Phi) &= \mathrm{H} \mc{X}(\theta, \Phi) \mathrm{H} = \mathrm{H} e^{- \I \pi / 4 Z} \mc{X}(\theta, \Psi) e^{\I \pi/4 Z} \mathrm{H}\\
        &= (-1)^{d/2} \begin{pmatrix}
    u(\cos\theta) &
    - \sin\theta\, w(\cos\theta) + \I v(\cos\theta) \\
    \sin\theta\, w(\cos\theta) + \I v(\cos\theta) &
    u(\cos\theta)
  \end{pmatrix}.
    \end{split}
\end{equation}
Note that the prefactor $(-1)^{d/2}$ is a global phase, which does not affect any measurement result. We can also absorb it into the definition of the polynomials $u, v, w$ for consistency.

This completes the mapping between the two conventions and hence completes the proof. Furthermore, \cref{eqn:Z-X-phase-factor-relation} also provides a practical way to derive phase factors. Given a target polynomial $g(\theta) = u(\cos\theta)$ constructed by solving the optimal design problem, we can derive the $X$-convention phase factors $\Psi$ by using \textsf{QSPPACK} \cite{dong_efficient_2021,QSPPACK}. Then, we can use \cref{eqn:Z-X-phase-factor-relation} and adjust the two edge phase factors to derive the set of phase factors used in the circuit construction.
\end{proof}

\subsection{Signal class and problem formulation of QSP-PE}

As stated in \cref{thm:qsp_rep}, the corresponding signal class consists of cosine transforms with definite parity, given by bounded trigonometric polynomials. This can be equivalently written as follows
\begin{equation}\label{eq:signal-parametrization}
    \mc{G}_d := \left\{ g(\theta) = \sum_{k = 0}^d a_k \cos(k\theta) : a_k = 0 \ \forall\, k \not\equiv d \pmod{2}, \ \text{and } \abs{g(\theta)} \le 1 \ \forall\, \theta \right\}.
\end{equation}

The measurement probability is given by the square of the transformed value. Hence, we define the admissible signal class for our estimation task as follows.

\begin{defn}[Admissible signal class]
The admissible signal class is defined as
\begin{equation}
\mathcal{F}_d 
:= \left\{ f : f(\theta) = g^2(\theta),\ g\in\mathcal{G}_d\right\}.
\end{equation}
Note that for $f\in\mathcal{F}_d$, we have $f(\theta)\in[0,1]$ for all $\theta$.
\end{defn}

This set defines the design space of admissible estimation protocols. The performance of an estimator is typically characterized in terms of its \emph{sensitivity}, namely the derivative of the signal function in a neighborhood of the ground-truth parameter. This intuition can be justified as follows.

In the estimation process, due to the uncertainty arising from finite measurement shots, the estimate of the parameter is also subject to uncertainty. By locally linearizing the measurement signal, we have the following error propagation
\begin{equation}
    s = f(\theta) \Rightarrow \Delta \theta \approx \frac{\Delta s}{f^\prime(\theta)}.
\end{equation}
In the absence of the exact parameter value, the only information we have with high confidence is the confidence interval $\theta \in \mc{I}$. Consequently, it suffices to maximize the sensitivity $\abs{f^\prime(\theta)}$ in order to minimize the estimation variance
\begin{equation}
    \mathrm{Var}(\hat{\theta}) \propto \frac{1}{\min_{\theta \in \mc{I}} (f^\prime(\theta))^2}.
\end{equation}

This leads to the following optimal design principle, which can be formulated as a max-min optimization problem over the admissible signal class:
\begin{equation}\label{eqn:max-min-optimal-condition}
    f^\star_{d, \mc{I}}
    = \arg\max_{f \in \mc{F}_d}
      \min_{\theta \in \mc{I}} \abs{f^\prime(\theta)}.
\end{equation}
This max-min optimization problem suggests that the optimal signal maximizes the worst-case sensitivity over the entire confidence interval. We will analyze the structure of this problem and develop efficient solution methods in the next section.

Before concluding this section, we present an upper bound on the sensitivity. Here, $f(\theta) = g^2(\theta)$, where $g$ is a trigonometric polynomial of degree at most $d$, and its supremum norm is bounded by one as a consequence of the unitarity of the quantum circuit. Then, by applying the Bernstein-Szeg\"{o} inequality \cite{leonteva_bernsteinszego_2022}, the following holds
\begin{equation}
    \abs{f'(\theta)} = 2 \abs{g(\theta)} \abs{g'(\theta)} \le 2d \abs{g(\theta)} \sqrt{1 - g^2(\theta)} \le 2d \cdot \frac{1}{2} = d.
\end{equation}
Therefore, we normalize the sensitivity by the upper bound $d$ when defining the sensitivity efficiency
\begin{equation}
    \kappa(f, \mc{I}) := \frac{\min_{\theta \in \mc{I}} \abs{f^\prime(\theta)}}{d}.
\end{equation}

\section{Optimal Signal Design}

This appendix provides additional details on the optimal signal design problem introduced in the main text. We first analyze the structure of the max-min formulation and characterize a necessary relation between the degree parameter and the radius of the confidence interval. We then present a relaxed optimization reformulation that enables efficient numerical solution. Finally, we report numerical results illustrating the behavior of the optimized signals and validating the theoretical results.

\subsection{Structure of the optimal design problem}

The optimal design principle in \cref{eqn:max-min-optimal-condition} is formulated as a max-min optimization over the admissible signal class $\mathcal{F}_d$. Since any admissible signal is the square of a transformation function $g \in \mc{G}_d$, the derivative of $f$ admits a special structure. We restate this optimality principle in the following definition.

\begin{defn}[Optimal design principle]\label{def:max-min}
    Given an integer $d$ and a confidence interval $\mc{I} = [\theta_0 - \rads, \theta_0 + \rads]$ of radius $\rads$ centered at a prior $\theta_0$, the optimal signal function solves the following max–min design problem:
    \begin{equation}\label{eq:max-min-square-sig}
        \squares(d,\mc{I})
        := \max_{f\in\mathcal{F}_d} \min_{\theta\in \mc{I}} \abs{f'(\theta)}
        = \max_{g\in\mathcal{G}_d} \min_{\theta\in \mc{I}} 2 \abs{g(\theta) g^\prime(\theta)}.
    \end{equation}
\end{defn}

We observe that the sensitivity depends not only on the derivative $g'(\theta)$ of the trigonometric polynomial, but also on the value of $g(\theta)$ itself. In particular, even if $\abs{g'(\theta)}$ stays large, the factor $g(\theta)$ can completely suppress the signal derivative whenever $g$ crosses zero. This suggests that we should control $g(\theta)$ and $g'(\theta)$ simultaneously in order to ensure that the signal sensitivity is uniformly bounded away from trivial zero on the entire confidence interval $\mc{I}$.

Intuitively, a good design should make both the amplitude and the derivative of the transformation function $g$ large. In other words, on $\mc{I}$ the function $g$ is both large in magnitude and relatively steep. However, a bounded transformation function that is simultaneously large and steep cannot stay on a very long interval. If $\abs{g(\theta)}$ is constrained to lie in $[1 - \eta, 1]$ for all $\theta \in \mc{I}$, then the total variation of $g$ on $\mc{I}$ is at most $\eta$. On the other hand, if $\abs{g'(\theta)}$ is of order $d$ and the interval has radius $\rads$, a mean-value argument shows that $g$ must change by an amount comparable to $2 d \rads$ across $\mc{I}$. Once $\rads$ becomes much larger than $\eta/(2 d)$, these two requirements become incompatible: either the amplitude $\abs{g}$ or the slope $\abs{g^\prime}$ must drop.

This heuristic explains why the confidence interval cannot be arbitrarily wide and should have a natural scale $\rads = \Or(1/d)$. The next lemma formalizes this necessary scaling condition.

\begin{lem}[Necessary scale for optimal designs]\label{lem:interval-scale-derivative-band}
Let $g \in \mc{G}_d$ be a transformation function and $\mc{I}$ be a confidence interval with radius $\rads$. Let $\alpha, \gamma \in (0, 1)$ be two fixed parameters. Suppose the transformation function satisfies both
\begin{equation}\label{eq:amp-band}
    1 \ge \abs{g(\theta)} \ge \alpha \text{ and } \abs{g'(\theta)} \ge \gamma d
    \quad \text{for all } \theta \in \mc{I}.
\end{equation}
Then the radius $\rads$ must satisfy
\begin{equation}\label{eq:K-necessary}
    \rads \le \frac{1- \alpha}{2 \gamma d}.
\end{equation}
Equivalently, if $\rads > \frac{1 - \alpha}{2 \gamma d}$, the amplitude and derivative constraints in \eqref{eq:amp-band} cannot hold simultaneously.
\end{lem}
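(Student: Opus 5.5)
The plan is to combine continuity, which forces a fixed sign, with the mean value theorem, in the way the heuristic before the statement suggests. Since $g\in\mc{G}_d$ is a trigonometric polynomial, both $g$ and $g'$ are continuous on $\mc{I}=[\theta_0-\rads,\theta_0+\rads]$. The hypothesis $\abs{g(\theta)}\ge\alpha>0$ on $\mc{I}$ means $g$ never vanishes there, so by the intermediate value theorem $g$ has a constant sign on $\mc{I}$. Replacing $g$ by $-g$ if necessary, which changes neither hypothesis, we may assume $\alpha\le g(\theta)\le 1$ for all $\theta\in\mc{I}$. In the same way, $\abs{g'(\theta)}\ge\gamma d>0$ implies that $g'$ has constant sign on $\mc{I}$, so $g$ is strictly monotone there.

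Next I estimate the total change of $g$ across the interval in two ways. From the amplitude band, both endpoint values lie in $[\alpha,1]$, so
\begin{equation*}
\abs{g(\theta_0+\rads)-g(\theta_0-\rads)}\le 1-\alpha .
\end{equation*}
From the derivative band and the mean value theorem, there is some $\xi\in\mc{I}$ with
\begin{equation*}
\abs{g(\theta_0+\rads)-g(\theta_0-\rads)}=2\rads\,\abs{g'(\xi)}\ge 2\rads\gamma d .
\end{equation*}
Alternatively, integrating $\abs{g'}$ over $\mc{I}$ gives the same bound, because $g'$ has constant sign. Combining the two estimates gives $2\rads\gamma d\le 1-\alpha$, which is exactly \eqref{eq:K-necessary}. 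The equivalent formulation in the statement is just the contrapositive.

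The only step that needs care is the sign-fixing argument. Without it, $g$ could jump between the bands $[\alpha,1]$ and $[-1,-\alpha]$, and the total variation would only be bounded by $2$, not $1-\alpha$. Continuity of $g$ together with $\alpha>0$ rules this out. Everything else is an elementary calculation, and the constraint $g\in\mc{G}_d$ is needed only for continuity and differentiability, not for Bernstein-type bounds.
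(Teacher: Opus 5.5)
Your proposal is correct and follows the paper's proof essentially step for step: fix the sign of $g$ using continuity and $\alpha>0$, bound the endpoint difference above by $1-\alpha$ from the amplitude band, and bound it below by $2r\gamma d$ via the mean value theorem. As your argument implicitly shows, the monotonicity step that both you and the paper include is not actually needed, because both endpoint values already lie in $[\alpha,1]$.
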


\begin{proof}
Because $\abs{g(\theta)} \ge \alpha > 0$ on $\mc{I}$, the function $g$ does not change sign on $\mc{I}$. Without loss of generality, we assume $g > 0$ on $\mc{I}$. Otherwise, we can replace $g$ by $-g$ without affecting the conclusion. 

The derivative constraint implies that $g$ is monotonic. Hence, its maximum and minimum on $\mc{I}$ are attained at the endpoints. 
From the amplitude constraint, we obtain
\begin{equation}\label{eq:amp-diff-upper}
    \abs{g(\theta_0 - \rads) - g(\theta_0 + \rads)} = \max_{\theta \in \mc{I}} g(\theta) - \min_{\theta \in \mc{I}} g(\theta) \le 1 - \alpha.
\end{equation}
On the other hand, using Mean Value Theorem, there exists $\xi \in (\theta_0 - \rads,\theta_0 + \rads)$ such that
\begin{equation}
    \abs{g(\theta_0 - \rads) - g(\theta_0 + \rads)} = 2 \rads \abs{g^\prime(\xi)} \ge 2 \gamma d \rads.
\end{equation}
Combining these inequalities gives
\begin{equation}
    2 \gamma d \rads \le 1 - \alpha.
\end{equation}
This completes the proof.
\end{proof}

In the rest of the paper, we set $\alpha = 1 /2$. Note that $\gamma < 1$. It suffices to set the radius of the confidence interval to
\begin{equation}\label{eq:CI-d-relation}
    \rads = \frac{1}{4 d},
\end{equation}
so the above requirement is satisfied.

\subsection{Relaxed optimization reformulation}

A brute-force way to tackle the optimal design principle in \cref{def:max-min} is to parametrize the transformation function $g$ by its Fourier coefficients. One can then enforce a uniformly lower bounded sensitivity by imposing inequality constraints on a dense set of sampled points in $\mc{I}$. Since the signal function $f$ is given by the square of $g$, it depends quadratically on these coefficients. This leads to a nonconvex Quadratically Constrained Quadratic Program (QCQP), which is numerically difficult to solve and does not scale well with $d$ \cite{boyd2004convex}.

Instead, we derive the solution from a surrogate of the original optimization problem. As analyzed in the last subsection, a sufficient condition for high sensitivity is that the amplitude of the transformation function is bounded away from zero and its derivative is uniformly large. This motivates a reformulation of the original optimal design principle in \cref{def:max-min}: we separately control the uniform amplitude and the uniform derivative of a \emph{single} transformation function $g$ on the confidence interval by introducing auxiliary variables.

\begin{defn}[Relaxed optimal design principle]\label{def:relaxed_optimization}
Given an integer $d$, a parameter $\alpha \in (0,1)$, and a confidence interval $\mc{I}$, let $\wt{g}_{d,\mc{I},\alpha} \in \mc{G}_d$ solve
\begin{equation}
    \begin{split}
        \mathrm{maximize}\quad     & \min_{\theta \in \mc{I}} \abs{g'(\theta)} \\
        \mathrm{subject to}\quad   & g \in \mc{G}_d,\ \min_{\theta \in \mc{I}} \abs{g(\theta)} \ge \alpha.
    \end{split}
\end{equation}
\end{defn}

This provides a relaxation of the original optimal design principle, in the sense that it yields explicit feasible instances and hence a lower bound on the overall optimum:
\begin{equation}
    \squares(d,\mc{I})
    \ge \sup_{0 < \alpha < 1} 2 \alpha B(\wt{g}_{d,\mc{I},\alpha})
\end{equation}
where
\begin{equation}
    B(\wt{g}_{d,\mc{I},\alpha})
    := \min_{\theta \in \mc{I}} \abs{\wt{g}_{d,\mc{I},\alpha}'(\theta)}.
\end{equation}

Practically, we can introduce an explicit auxiliary variable $\beta$ to represent the uniform derivative lower bound $B(g) \ge \beta$. The optimization problem can then be written in a slack form by enforcing the constraints on $M$ sampled points in $\mc{I}$:
\begin{equation}\label{eqn:relaxed_optimization_slack_form}
    \begin{split}
        \text{maximize}\quad   & \beta  \\
        \text{subject to}\quad 
        & g \in \mc{G}_d, \\
        & \abs{g(\theta_i)} \ge \alpha,\quad \abs{g'(\theta_i)} \ge \beta,
          \quad i = 1, \dots, M,\ \theta_i \in \mc{I}.
    \end{split}
\end{equation}

The advantage of this relaxed optimization problem is now clear. By introducing the two auxiliary variables $\alpha$ and $\beta$, the original quadratic constraints in the coefficients are replaced by linear constraints on the function $g$. As a result, the challenging QCQP is reduced to a simpler problem. For each fixed value of $\alpha$, we optimize over $\beta$ and the coefficients of $g$ under linear amplitude and derivative constraints, effectively decoupling the control of the amplitude and the derivative.

Following the parametrization in Eq.~\eqref{eq:signal-parametrization}, the optimal coefficient vector $\mathbf{a}^{\star}$ is determined via an outer-loop grid search over the auxiliary variable $\alpha$. For each $\alpha$, the remaining parameters are resolved by repeating over all four sign patterns $(\iota_g, \iota_{g'}) \in \left\{+1,-1\right\}^2$, which fully linearizes the absolute value conditions $|g(\theta_i)| \geq \alpha$ and $|g'(\theta_i)| \geq \beta$ and bypasses non-smoothness. The optimal signal is finally selected as the best solution over all sign patterns and $\alpha$ candidates.

\begin{algorithm}[H]
\caption{Solving Optimally Designed Signal}
\label{alg:sqaured-sig-optimization}
\begin{algorithmic}

\Require $d$, prior center $\theta_0$, grid sizes $N_{\mathrm{amp}}, N_{\mathrm{prior}}, N_\alpha$.
\Ensure Optimal coefficients $\mathbf{a}^\star$, parameters $(\alpha^\star, \beta^\star)$.

\State Set interval radius $\rads = 1/(4d)$.
\State Construct grids $\Theta_{\text{amp}}$, $\Theta_{\text{prior}}$, and $\mathcal{A}$ as equidistant points on $[0,\pi]$, $[\theta_0 - r,\, \theta_0 + r]$, and $[0,1]$, with $N_{\text{amp}}$, $N_{\text{prior}}$, and $N_{\alpha}$ points, respectively.
\State Set $\texttt{best\_obj} \leftarrow -\infty$.
\For{each $\alpha \in \mathcal{A}$}
    \For{each sign pattern $(\iota_g, \iota_{g'}) \in \left\{+1,-1\right\}^2$}
        \State Solve the LP: maximize $2\alpha\beta$ over $\{a_k\}$ and $\beta \in [0,d]$, subject to:
        \State \quad (i) $|g(\mathbf{a}, \theta_j)| \leq 1$ for all $\theta_j \in \Theta_{\text{amp}}$
        \State \quad (ii) $\iota_g \cdot g(\mathbf{a}, \theta_i) \geq \alpha$ for all $\theta_i \in \Theta_{\text{prior}}$
        \State \quad (iii) $\iota_{g'} \cdot g'(\mathbf{a}, \theta_i) \geq \beta$ for all $\theta_i \in \Theta_{\text{prior}}$
        \If{$2\alpha\beta > \texttt{best\_obj}$}
            \State Update $\texttt{best\_obj}$ and store $(\mathbf{a}^\star, \alpha^\star, \beta^\star)$.
        \EndIf
    \EndFor
\EndFor
\State \Return $(\mathbf{a}^\star, \alpha^\star, \beta^\star)$

\end{algorithmic}
\end{algorithm}

\subsection{Numerical results of optimal signal design}
In this section, we numerically evaluate the performance of optimal signals constructed via the design principle in \cref{alg:sqaured-sig-optimization}. We demonstrate that the optimal signal is attained over a bounded confidence interval, as shown in \cref{fig:interval-shrinking}, within which the sensitivity efficiency $\kappa$ improves monotonically as the interval shrinks. This improvement is also uniform across target phases $\theta$ and consistently exceeds the RPE baseline, as shown in \cref{fig:kappa-theta-zeta-L-d-regression}. To account for this, we analyze the optimized signal in \cref{fig:f-shape-mode} and find that the QSP-PE signal is richer in Fourier modes than the simple cosine and sine signals of RPE, making it more efficient for phase estimation.

We first investigate the confidence interval radius within which the optimal signal can be reliably constructed. We solve for the optimal signal $f^{\star}_{d,\mathcal{I}} := (g^{\star}_{d,\mathcal{I}})^2$, where each experiment fixes a query depth $d$ and a confidence interval $\mathcal{I} := [\theta - r, \theta + r]$ centered at $\theta$ with radius $r$. The sensitivity efficiency is defined as $\kappa := L/d$, where $L := \min_{\theta \in \mathcal{I}} |f'(\theta)|$ denotes the minimum derivative magnitude over $\mathcal{I}$.

As suggested by Lemma \ref{lem:interval-scale-derivative-band}, obtaining a large $L$ for a degree-$d$ signal requires both $|g|$ and $|g'|$ to be sufficiently large throughout $\mc{I}$, imposing the constraint $r \leq 1/(4d)$. To investigate the behavior near this theoretical threshold, we evaluate the optimal signal across a range of interval radii with $d = 10$ and $\theta_0 = \sqrt{\pi}/2$, as illustrated in \cref{fig:interval-shrinking}.

\begin{figure}[htbp]
\centering
\includegraphics[width=0.96\textwidth]{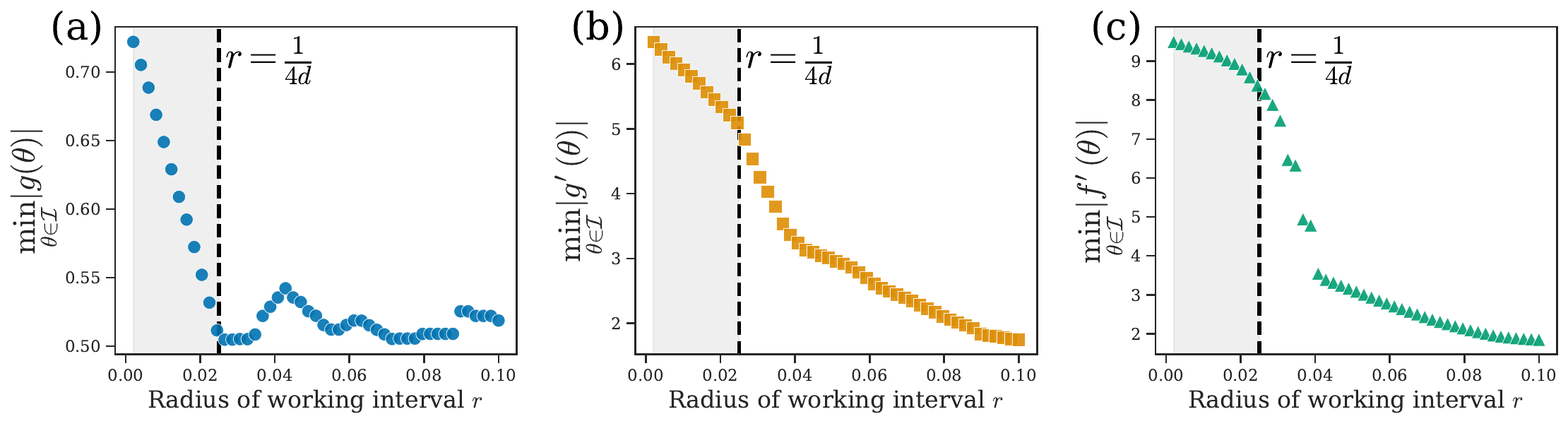}
\caption{Numerical analysis of optimized signal performance as a function of the working interval radius $r$: (a) The maximum lower bounds on $|g(\theta)|$, (b) $|g'(\theta)|$, and (c) $|f'(\theta)|$ all occur within a bounded (shaded) region of $r$.}
\label{fig:interval-shrinking}
\end{figure}

A phase transition is evident at the critical radius $r = 1/4d$. As shown in \cref{fig:interval-shrinking}(c), the magnitude $L$ exhibits a clear separation across this threshold. Within the critical radius (shaded region), $L$ is consistently larger than outside it (white region). This confirms that the confidence interval must satisfy $r \leq 1/4d$ for effective optimal signal design.

A further observation is that $L$ grows monotonically as the radius shrinks. This aligns with the intuition that optimizing over a smaller interval relaxes the max-min problem, yielding a larger minimum derivative. Motivated by this, we introduce a \textit{shrinkage factor} $\zeta$, which refines the interval radius to $r = 1/(\zeta \cdot 4d)$. A larger $\zeta$ therefore corresponds to a smaller interval and, consequently, a higher sensitivity efficiency $\kappa$.

\begin{figure}[htbp]
\centering
\includegraphics[width=0.96\textwidth]{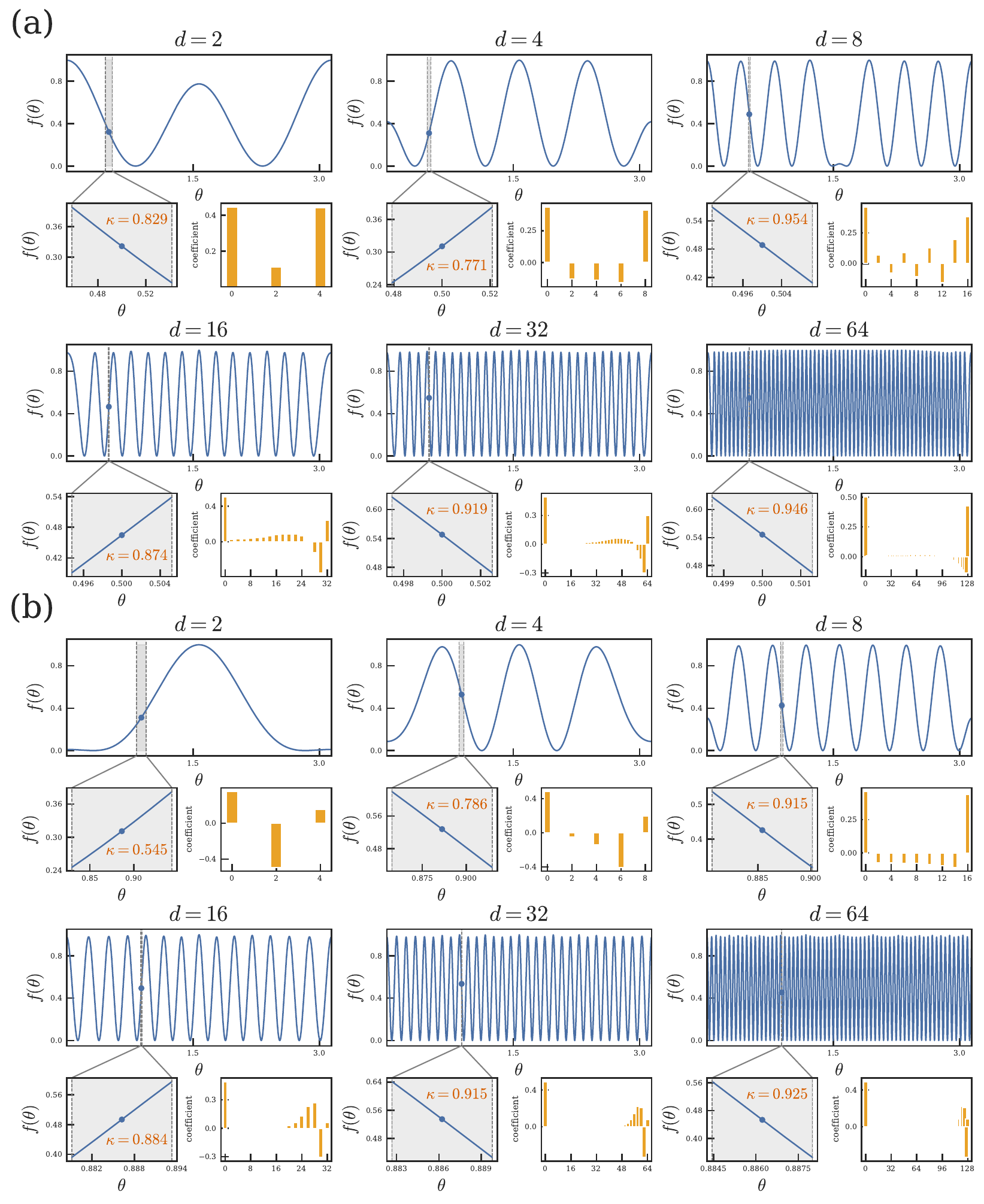}
\caption{Optimized signal $f(\theta) = g^2(\theta)$ in QSP-PE and its Fourier cosine modes at query depths $d = 2, 4, 8, 16, 32, 64$. Each panel shows (top) the signal over $\theta \in [0, \pi]$ with the prior region shaded, (bottom-left) a zoom-in near the target phase $\theta^\star$ with the normalized sensitivity $\kappa = L/d$ annotated, and (bottom-right) the Fourier cosine coefficients of $f$. (a) All signals are optimized for target phase $\theta^\star = 0.5$ with $\zeta = 2$. The sensitivity efficiency $\kappa$ averages $0.882$ across all query depths. (b) All signals are optimized for target phase $\theta^\star = \sqrt{2}/2$ with $\zeta = 2.2$. The sensitivity efficiency $\kappa$ averages $0.828$ across all query depths. Overall, the $\kappa$ results demonstrate that QSP-PE consistently achieves near-optimal signal sensitivity efficiency relative to the polynomial degree $d$.}
\label{fig:f-shape-mode}
\end{figure}

To understand the structural advantage of QSP-PE signals as discussed in \cref{fig:kappa-theta-zeta-L-d-regression} and \cref{fig:main-figure}, we examine the optimized signal $f(\theta)$ and its Fourier decomposition across query depths $d = 2, 4, 8, 16, 32, 64$ in \cref{fig:f-shape-mode}. Each panel shows three views: (top) the full signal over $\theta \in [0, \pi]$ with the prior region $\mathcal{I}$ shaded, (bottom-left) a zoom-in near the target phase $\theta^\star$ with $\kappa$ annotated, and (bottom-right) the Fourier cosine coefficients of $f$. The signals in \cref{fig:f-shape-mode}(a) are optimized for $\theta^\star = 0.5$ with $\zeta = 2$, while those in \cref{fig:f-shape-mode}(b) are optimized for $\theta^\star = \sqrt{2}/2$ with $\zeta = 2.2$.

As $d$ increases, two trends are immediately apparent. First, the signal develops richer oscillatory structure globally while maintaining a steep, nearly linear profile within the prior $\mathcal{I}$, which is precisely the behavior that maximizes $L$. This local steepness is visible in the zoom-in panels: across all six depths, the signal is monotone and well-sloped over $\mathcal{I}$, with most $\kappa$ values exceeding 0.85. Second, the Fourier coefficient panels reveal that $f$ activates modes across the full available bandwidth ${0, 2, 4, \ldots, 2d}$, with nontrivial weight distributed at both low and high frequencies. This stands in sharp contrast to the RPE reference signals $(1 + \cos(2d\theta))/2$ and $(1 + \sin(2d\theta))/2$, which concentrate at exactly two modes: the constant mode 0 and the highest frequency mode $2d$. By spreading spectral weight across intermediate modes, the QSP-PE signal can simultaneously control both the amplitude $|g|$ and the derivative $|g'|$ over the prior, achieving a larger minimum derivative $L$ than a two-mode signal of the same degree could. This Fourier richness is therefore not incidental but is the mechanism through which QSP-PE surpasses the RPE baseline $\kappa_{\text{RPE}} = 1/\sqrt{2} \approx 0.707$ uniformly across query depths and target phases.

\section{Iterative Refinement Algorithm and Analysis}

In general, an estimation signal of higher degree has a much steeper derivative. Thus, in principle, it enables a more sensitive estimate of the underlying parameter. However, directly using a high-degree signal comes with a structural cost. Higher-degree trigonometric polynomials are more oscillatory, so the likelihood landscape develops many additional local minima. This phenomenon is often referred to as \textit{aliasing}.

To fully exploit the statistical power of such signals, a common strategy is to identify the correct solution branch progressively starting from lower-degree signals. This multistage procedure refines the estimate as $d$ increases, so that the estimator moves closer to the true parameter and the estimation variance is reduced at each stage. In this section, we analyze this algorithm, showing how to schedule the depth-measurement pairs in each stage and how to extract parameter estimates via efficient classical post-processing.

\subsection{Iterative refinement scheme}

We consider an adaptive iterative estimator for a parameter $\theta$ under a resource schedule $\left\{(d_k,m_k)\right\}_{k\ge 0}$, where $d_k$ is the query depth and $m_k$ is the number of measurement shots at step $k$. The procedure iteratively refines a confidence interval for $\theta$ estimation using optimized signals with bounded derivatives and resource-aware confidence updates.

To describe the iterative refinement scheme, we first define the quantities that appear in each iteration $k$:
\begin{itemize}
    \item Query depth $d_k$ (the number of $U(\theta)$ applications) and the number of measurement shots $m_k$.
    \item Parameter estimate $\hat{\theta}_k$ and confidence radius $r_k$, so that the confidence interval is $\mc{I}_k := [\hat{\theta}_k - r_k,\ \hat{\theta}_k + r_k]$.
    \item Designed transformation function $g_k \in \mc{G}_{d_k}$ and a derivative lower bound for the corresponding signal $L_k \le \min_{x \in \mc{I}_k} \abs{f_k'(x)}$, which is read off from the optimization result.
    \item Success event $S_k = \left\{\abs{\hat{\theta}_k - \theta} \le r_k\right\}$, meaning that the true value lies in the current confidence interval, and one-step conditional failure probability $\delta_k \ge \mathbb{P}(S_k^c | S_{k - 1})$.
\end{itemize}

The refinement scheme starts with low-degree signals to obtain a coarse estimate.

We now analyze how to update these quantities in the next iteration. At step $k+1$, the information about $\theta$ is encoded in the confidence interval $\mc{I}_k$ obtained at the end of step $k$. Our goal is to design a steeper signal $f_{d_{k+1},\mc{I}_k}^\star$ on this working interval $\mc{I}_k$ using a circuit of depth $d_{k+1}$. 

According to the relation in \cref{eq:CI-d-relation}, when we set the amplitude lower bound of the signal to $1/2$, the working depth $d_{k+1}$ and the confidence radius $r_k$ at the current stage are linked by
\begin{equation}
    d_{k+1} = \left\lceil \frac{1}{4 r_k} \right\rceil.
\end{equation}
In practice, such an optimal signal can be obtained by solving the relaxed optimization problem in \cref{def:relaxed_optimization,eqn:relaxed_optimization_slack_form}. The corresponding quantum circuit is then implemented using phase factors generated by \textsf{QSPPACK} \cite{dong_efficient_2021,QSPPACK}.

The signal is derived from experiments by averaging binary measurement outcomes. With $m_{k+1}$ shots, we obtain
\begin{equation}
    s_{k+1}
    = \frac{1}{m_{k+1}} \sum_{j=1}^{m_{k+1}} x_j
    := f_{d_{k+1},\mc{I}_k}^\star(\theta) + z_{k+1},
    \quad x_j \stackrel{\mathrm{IID}}{\sim} \mathrm{Ber}(f_{d_{k+1},\mc{I}_k}^\star(\theta)).
\end{equation}

In the classical post-processing step, we update the estimate by solving a constrained least squares problem:
\begin{equation}\label{eq:classical-step}
    \hat{\theta}_{k+1}
    = \mathrm{arg}\min_{x \in \mc{I}_k} \abs{f_{d_{k+1},\mc{I}_k}^\star(x) - s_{k+1}}^2
    =: h_{k+1}(s_{k+1}).
\end{equation}
In the next subsection, we show that this problem can be solved very efficiently using a Newton's bisection method, thanks to the locally monotonic structure of the optimally designed signal.

The uncertainty of the updated estimate $\hat{\theta}_{k+1}$ can then be quantified, which leads to an updated confidence radius $r_{k+1}$. To proceed, we first bound the one-step conditional failure probability as follows.

\begin{lem}\label{lem:one-step-failure}
The conditional failure probability at step $(k+1)$, given that step $k$ succeeds, is bounded by
\begin{equation}
    \PP(S_{k+1}^c | S_k)
    \le 2 \exp\left( - 2 m_{k+1} L_{k+1}^2 r_{k+1}^2 \right).
\end{equation}
\end{lem}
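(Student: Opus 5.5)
The bound will follow from Hoeffding's inequality applied to the empirical mean $s_{k+1}$, together with the monotonicity of the designed signal on the working interval $\mc{I}_k$. The idea is to turn the event $S_{k+1}^c$, i.e.\ $\abs{\hat{\theta}_{k+1}-\theta} > r_{k+1}$, into the event that the sampling noise $z_{k+1} = s_{k+1} - f^\star_{d_{k+1},\mc{I}_k}(\theta)$ is large. Concretely, I would show
\begin{equation*}
    S_{k+1}^c \cap S_k \subseteq \left\{ \abs{z_{k+1}} \ge L_{k+1} r_{k+1} \right\}.
\end{equation*}
The shots $x_j$ are i.i.d.\ Bernoulli variables taking values in $[0,1]$. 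Hoeffding's inequality then gives
\begin{equation*}
    \PP\left(\abs{z_{k+1}} \ge t\right) \le 2\exp(-2 m_{k+1} t^2).
\end{equation*}
Setting $t = L_{k+1} r_{k+1}$ yields exactly the claimed bound. Conditioning on $S_k$ is harmless here. The fresh samples at step $k+1$ are independent of the past given $\mc{I}_k$, so I would apply Hoeffding conditionally on $\hat{\theta}_k$ and then average over it.

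For the inclusion, write $f = f^\star_{d_{k+1},\mc{I}_k}$ and work on $S_k$, so that $\theta \in \mc{I}_k$. By construction (the relaxed design of \cref{def:relaxed_optimization}, solved with a fixed sign pattern $(\iota_g,\iota_{g'})$), both $g$ and $g'$ keep a constant sign on $\mc{I}_k$. Hence $f' = 2gg'$ has constant sign, and $\abs{f'} \ge L_{k+1} > 0$ there, so $f$ is strictly monotone on $\mc{I}_k$. The estimator $\hat{\theta}_{k+1} = h_{k+1}(s_{k+1})$ is the projection-type inverse: the unique preimage of $s_{k+1}$ if $s_{k+1} \in f(\mc{I}_k)$, and otherwise the endpoint whose image is nearest to $s_{k+1}$. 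In both cases $\hat{\theta}_{k+1}$ and $\theta$ lie in $\mc{I}_k$. The mean value theorem then gives
\begin{equation*}
    \abs{f(\hat{\theta}_{k+1}) - f(\theta)} \ge L_{k+1}\abs{\hat{\theta}_{k+1}-\theta}.
\end{equation*}
The remaining step is to compare $\abs{f(\hat{\theta}_{k+1}) - f(\theta)}$ with $\abs{s_{k+1} - f(\theta)} = \abs{z_{k+1}}$. In the interior case these two quantities are equal. In the clipped case, $f(\hat{\theta}_{k+1})$ is the endpoint value lying between $f(\theta)$ and $s_{k+1}$, because $f(\theta) \in f(\mc{I}_k)$, which is an interval. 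So the distance only shrinks, and $\abs{f(\hat{\theta}_{k+1}) - f(\theta)} \le \abs{z_{k+1}}$. Combining, $\abs{\hat{\theta}_{k+1}-\theta} \le \abs{z_{k+1}}/L_{k+1}$. Therefore $\abs{\hat{\theta}_{k+1}-\theta} > r_{k+1}$ forces $\abs{z_{k+1}} > L_{k+1} r_{k+1}$, which is the inclusion.

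The main point needing care is the projection and monotonicity step. It must hold that $L_{k+1}$ is a genuine lower bound on $\abs{f'}$ over the whole interval $\mc{I}_k$, not just at the discretization points. The paper's definition of $L_k$ as $\le \min_{x\in\mc{I}_k}\abs{f_k'(x)}$ supplies this directly. It must also hold that $f'$ does not change sign, which follows from the sign-constant $g$ and $g'$. Once monotonicity with slope at least $L_{k+1}$ is established, the contraction property of the clipped inverse and Hoeffding's inequality complete the argument.
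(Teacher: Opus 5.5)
Your proof is correct and follows the paper's route: on $S_k$ you bound $\abs{\hat\theta_{k+1}-\theta}$ by $\abs{z_{k+1}}/L_{k+1}$ using the monotone, clipped inverse $h_{k+1}$, and then apply Hoeffding's inequality to the fresh shots. You differ from the paper in two small ways. You inline the contraction argument that the paper isolates as \cref{lem:CIF_Lip}. You also handle the conditioning on $\hat\theta_k$ more carefully than the paper, which simply asserts that $z_{k+1}$ is independent of $S_k$. (Constant sign of $f'$ on $\mc{I}_k$ already follows from continuity and $\abs{f'}\ge L_{k+1}>0$. Your appeal to the LP sign patterns is therefore unnecessary, and those patterns are only enforced at grid points anyway.)
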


\begin{proof}
On the event $S_k$ we have $\theta \in \mc{I}_k$. By \cref{prop:clif-ls}, on $\mc{I}_k$, the estimation map $h_{k+1}$ locally inverts the signal. That is $h_{k+1} \circ f_{d_{k+1},\mc{I}_k}^\star\big|_{\mc{I}_k} = \mathrm{Id}$. Therefore, conditioned on $S_k$, it holds that
\begin{equation}
    \theta = h_{k+1}\big(f_{d_{k+1},\mc{I}_k}^\star(\theta)\big).
\end{equation}
Using the Lipschitz property in \cref{lem:CIF_Lip}, we obtain
\begin{equation}
    \abs{\hat{\theta}_{k+1} - \theta}
    = \abs{h_{k+1}(s_{k+1}) - h_{k+1}(f_{d_{k+1},\mc{I}_k}^\star(\theta))}
    \le \frac{1}{L_{k+1}} \abs{z_{k+1}}.
\end{equation}
Hence, on $S_k$,
\begin{equation}
    \left\{\abs{\hat{\theta}_{k+1} - \theta} > r_{k+1}\right\} \cap S_k
    \subset \left\{\abs{z_{k+1}} > L_{k+1} r_{k+1}\right\} \cap S_k.
\end{equation}

The noise term $z_{k+1}$ depends only on the fresh measurements at step $k+1$, and is independent of $S_k$, which is determined by all previous steps. Therefore
\begin{equation}
    \PP(S_{k+1}^c | S_k) \le \PP(\abs{z_{k+1}} > L_{k+1} r_{k+1}).
\end{equation}
Since $z_{k+1}$ is the centered average of $m_{k+1}$ Bernoulli variables, Hoeffding's inequality yields
\begin{equation}
    \PP(\abs{z_{k+1}} > L_{k+1} r_{k+1})
    \le 2 \exp\left( - 2 m_{k+1} L_{k+1}^2 r_{k+1}^2 \right).
\end{equation}
This completes the proof.
\end{proof}

Given a target failure probability $\delta_{k+1}$, we choose the radius $r_{k+1}$ of the next confidence interval $\mc{I}_{k+1} := [\hat{\theta}_{k+1} - r_{k+1},\ \hat{\theta}_{k+1} + r_{k+1}]$ by inverting the bound above:
\begin{equation}\label{eq:CI}
    r_{k+1}
    = \frac{1}{L_{k+1}} \sqrt{\frac{1}{2 m_{k+1}} \log\frac{2}{\delta_{k+1}}}.
\end{equation}

Using the depth-radius relation in \cref{eq:CI-d-relation}, the update rule for the depth parameter is
\begin{equation}
    d_{k+1}
    = \left\lceil \frac{1}{4 r_k} \right\rceil
    = \left\lceil \frac{L_k}{4} \sqrt{\frac{2 m_k}{\log(2 / \delta_k)}} \right\rceil
    \approx d_k \left( \frac{1}{4} \kappa \sqrt{\frac{2 m_k}{\log(2 / \delta_k)}} \right).
\end{equation}
Here, $\kappa$ is the \textit{sensitivity efficiency}, which measures how much sensitivity (derivative) per gate use yields, namely,
\begin{equation}
    \kappa \approx L_k / d_k \le 1.
\end{equation}
This gives a recursive update of the schedule $(d_k,m_k)$.

To achieve theoretical optimality in the sense of Heisenberg-limited scaling, a natural choice is to let the depth grow geometrically. This can be enforced by setting the prefactor in the parentheses to a control constant $q$, which constrains the measurement shots to
\begin{equation}\label{eqn:minimal-m}
    m_k = \left\lceil \frac{8q^2}{\kappa^2} \log\frac{2}{\delta_k} \right\rceil.
\end{equation}
This choice uses very few measurement shots at each step, which is almost a constant number.

At the same time, the depth grows as a geometric sequence:
\begin{equation}\label{eq:geometric-depth}
    d_k = q d_{k-1} = q^k,\quad k = 1,2,\dots,K.
\end{equation}
By iteratively repeating the estimation procedure with increasingly steep signals, the confidence radius shrinks exponentially and the estimator becomes exponentially more accurate. The full procedure is summarized in Algorithm~\ref{alg:iterative-refinement}.

Remarkably, when the measurement budget exceeds the minimal requirement in \cref{eqn:minimal-m}, these yield a less uncertain measurement outcome and can be leveraged to further tighten the confidence interval. Recall that, at the minimum shot count, the confidence radius is set to $r_k = 1/(4d_{k+1})$ in the algorithm workflow. Specifically, let us consider a ratio $\zeta$ so that $m_{\mathrm{actual}} = \zeta^2 m_{\mathrm{minimal}}$. Then, the resulting confidence interval has a smaller radius $r_k = 1/(4\zeta d_{k+1})$. This narrower interval can be tailored to design signals with higher sensitivity efficiency $\kappa$, as shown in \cref{fig:interval-shrinking}.

\begin{algorithm}[H]
\caption{Iterative Refinement for QSP-PE\label{alg:iterative-refinement}}
\begin{algorithmic}
\Require Number of iterations $K$, measurement shots $(m_0,\cdots,m_K)$ for each iteration, shrinkage factor $\zeta$, initial depth $d_0$, depth growth common ratio $q$.
\Ensure Estimate $\hat \theta_K$, radius of confidence interval $r_K$.

\State Implement $e^{\I \theta Z}\ket{0}$. Measure $\ket{0}$ and obtain binary measurement outcomes $\left\{x_1,\ldots,x_{m_0}\right\}$.
\State Compute the signal $s_0 = 1/m_0\sum_i x_i$ and obtain the initial estimator $\hat{\theta}_0 = 1/2 \cdot \arccos(2s_0 - 1)$.
\State Initialize confidence radius $r_0$ and set $\mathcal{I}_0 = [\hat{\theta}_0 - r_0,\, \hat{\theta}_0 + r_0]$.

\For{$k = 1$ to $K$}

    \State Solve the max-min problem over $\mathcal{F}_{d_k}$ via \cref{alg:sqaured-sig-optimization} to obtain the optimized signal $f_{d_k,\mathcal{I}_{k-1}}^\star(\hat{\theta}_{k-1})$ and record the derivative lower bound $L_k$.

    \State Prepare the state $\mathcal{Q}(\theta, \Phi)\ket{0}$ where $\Phi$ is the corresponding QSP phase factors solved by \textsf{QSPPACK}. Measure $\ket{0}$ and obtain binary measurement outcomes $\left\{x_1,\ldots,x_{m_k}\right\}$.

    \State Compute the signal $s_k = 1/m_k\sum_i x_i$ and obtain the updated estimator $\hat{\theta}_k = h_k(s_k)$ via Newton's bisection method in \cref{eq:classical-step}.

    \State Update the confidence radius $r_k = \frac{1}{4\zeta\, d_{k+1}}$ and set $\mathcal{I}_k = [\hat{\theta}_k - r_k,\, \hat{\theta}_k + r_k]$.

\EndFor

\State \Return $\hat{\theta}_K, r_K$
\end{algorithmic}
\end{algorithm}

\subsection{Classical post-processing}
One important feature of our designed signal function is that it has a uniformly lower-bounded derivative on the confidence interval. Therefore, estimating the unknown parameter by inverting the measurement outcome is straightforward: we can recover $\theta$ by locating the intersection point on the signal curve. This is formally stated in the following theorem, where we explicitly derive the form of the estimation function $h_{k+1}$, which is originally defined in \cref{eq:classical-step} through an implicit least squares problem.

\begin{thm}\label{prop:clif-ls}
    Let $\mc{I}_k = [a_k, b_k]$ be the confidence interval, and let $f_{d_{k+1}, \mc{I}_k}^\star$ be the signal function obtained from the optimal design principle. Suppose $\mc{J}_k := f_{d_{k+1}, \mc{I}_k}^\star(\mc{I}_k)$, $J_k^{\max} := \max \mc{J}_k$ and $J_k^{\min} := \min \mc{J}_k$. The solution to the constrained least squares problem in \cref{eq:classical-step} is
    \begin{equation}\label{defn:estimation_function}
        h_{k+1}(s_{k+1}) = \left\{
        \begin{array}{ll}
            a_k, & \text{when } s_{k+1} < J_k^{\min}, \\[4pt]
            (f_{d_{k+1}, \mc{I}_k}^\star)^{-1}(s_{k+1}), & \text{when } s_{k+1} \in \mc{J}_k, \\[4pt]
            b_k, & \text{when } s_{k+1} > J_k^{\max}.
        \end{array}
        \right.
    \end{equation}
\end{thm}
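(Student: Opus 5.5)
The plan is to exploit that the designed signal is strictly monotone on $\mc{I}_k$, which turns the constrained least squares problem in \cref{eq:classical-step} into projecting $s_{k+1}$ onto the range $\mc{J}_k$ and then inverting. Write $f := f_{d_{k+1},\mc{I}_k}^\star = g^2$, where $g \in \mc{G}_{d_{k+1}}$ is a trigonometric polynomial, so $f$ is $C^\infty$ on $\RR$.

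\textbf{Step 1: strict monotonicity.} The optimal design (via the relaxed problem in \cref{def:relaxed_optimization}) guarantees $\min_{x\in\mc{I}_k}\abs{f'(x)} \ge L_{k+1} > 0$. Since $f'$ is continuous on the connected set $\mc{I}_k$ and never vanishes there, the intermediate value theorem shows $f'$ has constant sign on $\mc{I}_k$. Concretely, $f' = 2gg'$, and the sign pattern $(\iota_g,\iota_{g'})$ in \cref{alg:sqaured-sig-optimization} fixes the sign of both factors, so the sign of $f'$ is $\iota_g\iota_{g'}$.

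I will treat the increasing case $f'>0$, which is the orientation implicit in \cref{defn:estimation_function}. If $f'<0$, the same argument goes through with $a_k$ and $b_k$ interchanged, or equivalently after the reflection $x \mapsto a_k+b_k-x$. I would state this convention explicitly, since the displayed formula presupposes it.

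\textbf{Step 2: bijection onto the range.} With $f$ strictly increasing and continuous, $f$ maps $\mc{I}_k=[a_k,b_k]$ bijectively onto $\mc{J}_k = [f(a_k), f(b_k)]$. Hence $J_k^{\min} = f(a_k)$ and $J_k^{\max} = f(b_k)$, and the inverse $f^{-1}:\mc{J}_k\to\mc{I}_k$ is well defined and continuous.

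\textbf{Step 3: the three cases.} Consider the objective $\phi(x) = (f(x)-s_{k+1})^2$ on $\mc{I}_k$.

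\emph{Case $s_{k+1}\in\mc{J}_k$.} The point $x^\ast = f^{-1}(s_{k+1})$ gives $\phi(x^\ast) = 0$, which is the global minimum. It is the unique minimizer because $f$ is injective.

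\emph{Case $s_{k+1} < J_k^{\min}$.} For every $x\in\mc{I}_k$,
\[
f(x)-s_{k+1} \ge f(a_k)-s_{k+1} > 0 .
\]
Since $f$ is increasing, $\phi(x) = (f(x)-s_{k+1})^2$ is strictly increasing in $x$. Its unique minimizer is therefore $a_k$.

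\emph{Case $s_{k+1} > J_k^{\max}$.} Here $f(x)-s_{k+1} < 0$ for all $x\in\mc{I}_k$, and $\abs{f(x)-s_{k+1}}$ is strictly decreasing in $x$. The unique minimizer is therefore $b_k$.

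Together these three cases give exactly \cref{defn:estimation_function}, including uniqueness of the $\arg\min$.

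\textbf{Main obstacle.} The only delicate point is Step 1. One must know that the certified lower bound $L_{k+1}$ holds on all of $\mc{I}_k$, not just at the sample points $\theta_i$ used in the discretized LP. I would take this as given, as the paper does when it defines $L_k \le \min_{x\in\mc{I}_k}\abs{f_k'(x)}$. If needed, it can be justified by a Bernstein-type bound on $f''$ combined with a sufficiently fine grid.

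A side benefit is that $f^{-1}$ on $\mc{J}_k$ can be computed by bisection, because $f - s_{k+1}$ changes sign exactly once on $\mc{I}_k$. This is the efficiency claim made for the post-processing step.
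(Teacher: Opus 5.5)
Your proof is correct and follows the paper's own argument: strict monotonicity of the designed signal on $\mathcal{I}_k$ gives a unique zero-residual root when $s_{k+1}\in\mathcal{J}_k$, and forces the minimizer to an endpoint otherwise. Your version is slightly sharper than the paper's. You make explicit that the displayed formula presupposes $f$ increasing, so that $J_k^{\min}=f(a_k)$, with $a_k$ and $b_k$ interchanged in the decreasing case. You also prove uniqueness of the endpoint minimizer, where the paper only says ``the endpoint that yields the smaller distance.''
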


\begin{proof}
    When $s_{k+1} \notin \mc{J}_k$, the value $f_{d_{k+1}, \mc{I}_k}^\star(x)$ never coincides with $s_{k+1}$ on $\mc{I}_k$, so the closest value is attained at one of the endpoints. The minimizer is therefore the endpoint $a_k$ or $b_k$ that yields the smaller distance.

    When $s_{k+1} \in \mc{J}_k$, the monotonicity of $f_{d_{k+1}, \mc{I}_k}^\star$ on $\mc{I}_k$ implies that the equation $f_{d_{k+1}, \mc{I}_k}^\star(x) = s_{k+1}$ has a unique solution in $\mc{I}_k$, which attains zero least squares error. Hence $h_{k+1}(s_{k+1})$ is given by the inverse $(f_{d_{k+1}, \mc{I}_k}^\star)^{-1}(s_{k+1})$.
\end{proof}

Thanks to the monotonicity of the signal function, when the sample lies in the working interval, $s_{k+1} \in \mc{J}_k$, the inversion of the signal can be solved efficiently by applying Newton's bisection method to the nonlinear equation
\begin{equation}
    f_{d_{k+1}, \mc{I}_k}^\star(\theta) - s_{k+1} = 0.
\end{equation}
Suppose the solution to this equation is $\hat\theta_{k+1} := h_{k+1}(s_{k+1})$. Then, after $n_{k+1}$ iterations of bisection, the numerical estimator satisfies
\begin{equation}
    \abs{\wt{\hat\theta}_{k+1} - \hat\theta_{k+1}} \le r_k / 2^{n_{k+1}}.
\end{equation}
We may iterate the solver sufficiently many times so that the numerical error is smaller than the size of the next confidence interval, for example at most $r_{k+1} / \eta$ for some $\eta > 1$. This leads to
\begin{equation}
    n_{k+1} = \lceil \log_2(\eta r_k / r_{k+1}) \rceil.
\end{equation}
Since each iteration requires the evaluation of a trigonometric polynomial of degree $d_{k+1}$, which costs $\Or(d_{k+1})$ classical operations, we obtain the following bound on the classical post-processing cost.

\begin{lem}\label{lem:classical-cost-per-step}
    Let $q = r_{k - 1} / r_k$ be the radius shrinkage parameter. At the $k$-th step, estimating $\hat{\theta}_{k}$ with numerical error at most $r_{k} / \eta$ for some $\eta > 1$ requires 
    \begin{equation}
        C_k = \Or(d_{k} n_{k}) = \Or(d_{k} \log(\eta r_{k-1} / r_{k})) = \Or(d_k \log(q \eta))
    \end{equation}
    classical operations.
\end{lem}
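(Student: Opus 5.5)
The plan is to bound the two ingredients of $C_k$ separately: the number of bisection iterations $n_k$ and the cost of one iteration. Their product is the total classical cost at step $k$.

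\emph{Iteration count.} I will use the bisection error bound derived just above, namely that after $n_k$ halvings of $\mc{I}_{k-1}$ the numerical estimator is within $r_{k-1}/2^{n_k}$ of the exact root $\hat\theta_k = h_k(s_k)$. This uses \cref{prop:clif-ls}: $f^\star_{d_k,\mc{I}_{k-1}}$ is monotone on $\mc{I}_{k-1}$, so a sign-based bisection on $f^\star - s_k$ keeps the root bracketed. The interval $\mc{I}_{k-1}$ has length $2r_{k-1}$, so after $n$ halvings the midpoint is within $r_{k-1}/2^{n}$ of the root.

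When $s_k \notin \mc{J}_{k-1}$, the output of \cref{prop:clif-ls} is an endpoint, and only a constant number of comparisons is needed to detect this. Requiring $r_{k-1}/2^{n_k} \le r_k/\eta$ gives
\begin{equation}
    n_k = \lceil \log_2(\eta r_{k-1}/r_k)\rceil = \lceil \log_2(q\eta)\rceil,
\end{equation}
using the definition $q = r_{k-1}/r_k$.

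\emph{Per-iteration cost.} Each iteration evaluates
\begin{equation}
    f^\star(\theta) = g^2(\theta), \qquad g(\theta) = \sum_{j=0}^{d_k} a_j\cos(j\theta),
\end{equation}
at one point and compares the result with $s_k$. I will argue this costs $\Or(d_k)$ floating-point operations. Concretely, compute $\cos\theta$ once and obtain all $\cos(j\theta)$ from the Chebyshev three-term recurrence $\cos((j+1)\theta) = 2\cos\theta\cos(j\theta) - \cos((j-1)\theta)$, or equivalently use Clenshaw summation in $x=\cos\theta$. The weighted sum then takes $\Or(d_k)$ operations, and squaring and comparing add $\Or(1)$.

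Multiplying gives $C_k = \Or(d_k n_k) = \Or(d_k\log(q\eta))$. If one prefers a Newton-accelerated step, evaluating $g'$ by differentiating the same recurrence also costs $\Or(d_k)$, and bisection safeguarding ensures the iteration count is no worse than that of bisection.

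The only point needing care is the one I expect to be the main obstacle: that the bisection bracket remains valid. This needs strict monotonicity of $f^\star$ on the whole of $\mc{I}_{k-1}$, not just at the discretization grid used in \cref{alg:sqaured-sig-optimization}. I would take this as part of the design guarantee, $L_k>0$ on $\mc{I}_{k-1}$, as the statement of \cref{prop:clif-ls} already does.
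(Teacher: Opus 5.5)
Your proposal is correct and follows essentially the same argument as the paper: the bisection bound $r_{k-1}/2^{n_k}\le r_k/\eta$ gives $n_k=\lceil\log_2(q\eta)\rceil$, and each iteration costs one $\Or(d_k)$ evaluation of the degree-$d_k$ trigonometric polynomial, so $C_k=\Or(d_k n_k)$. Your additions make explicit details the paper leaves implicit: the Chebyshev/Clenshaw evaluation, the endpoint check when $s_k\notin\mc{J}_{k-1}$ (two function evaluations rather than a constant number of comparisons, but still $\Or(d_k)$), and the caveat that monotonicity must hold on all of $\mc{I}_{k-1}$ and not only on the LP grid.
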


Another key feature of the estimation function $h_{k + 1}$ is its Lipschitz continuity.
\begin{lem}\label{lem:CIF_Lip}
    Let $L_{k + 1} := \min_{x \in \mc{I}_{k}} \abs{(f_{d_{k+1}, \mc{I}_k}^\star)^\prime(x)}$. Then $h_{k+1}$ is $1/L_{k+1}$-Lipschitz. Namely, for any $s_1, s_2 \in \RR$, we have
    \begin{equation}
        \abs{h_{k+1}(s_2) - h_{k+1}(s_1)} \le \frac{1}{L_{k+1}} \abs{s_2 - s_1}.
    \end{equation}
\end{lem}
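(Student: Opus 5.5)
We prove \cref{lem:CIF_Lip} by showing that $h_{k+1}$ is the composition of a $1$-Lipschitz clamping map with the inverse of a strictly monotone function whose derivative is bounded below by $L_{k+1}$. Write $f := f_{d_{k+1}, \mc{I}_k}^\star$, $\mc{I}_k = [a_k, b_k]$, and assume $L_{k+1} > 0$; otherwise the claim is vacuous.

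\textbf{Step 1: $f$ is strictly monotone on $\mc{I}_k$.} Since $f$ is continuous (a trigonometric polynomial), $f'$ is continuous. Because $\abs{f'} \ge L_{k+1} > 0$ on the connected set $\mc{I}_k$, $f'$ cannot change sign there. Hence $f$ is strictly monotone on $\mc{I}_k$. Without loss of generality take $f$ increasing; the decreasing case is symmetric. Then $\mc{J}_k = [f(a_k), f(b_k)]$, and $f \colon \mc{I}_k \to \mc{J}_k$ is a bijection.

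\textbf{Step 2: $f^{-1}$ is $1/L_{k+1}$-Lipschitz on $\mc{J}_k$.} Take $s_1, s_2 \in \mc{J}_k$ and set $x_i = f^{-1}(s_i)$. The mean value theorem gives some $\xi$ between $x_1$ and $x_2$, so $\xi \in \mc{I}_k$, with
\[
\abs{s_2 - s_1} = \abs{f'(\xi)}\,\abs{x_2 - x_1} \ge L_{k+1}\abs{x_2 - x_1}.
\]
Rearranging yields the bound.

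\textbf{Step 3: extend to all of $\RR$ by clamping.} Let $\pi(s) := \min(\max(s, J_k^{\min}), J_k^{\max})$ be the projection onto $\mc{J}_k$; it is $1$-Lipschitz. By \cref{prop:clif-ls}, $h_{k+1} = f^{-1} \circ \pi$ in the increasing case. Indeed, for $s < J_k^{\min}$ we get $f^{-1}(J_k^{\min}) = a_k$, and for $s > J_k^{\max}$ we get $b_k$. Composing a $1/L_{k+1}$-Lipschitz map with a $1$-Lipschitz map gives the claim for all $s_1, s_2 \in \RR$.

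\textbf{The decreasing case and the main subtlety.} If $f$ is decreasing, then $J_k^{\min} = f(b_k)$. Here one must read \cref{prop:clif-ls} as returning the endpoint closest in signal value: $b_k$ when $s < J_k^{\min}$, and $a_k$ when $s > J_k^{\max}$. This is exactly what its proof establishes, since the minimizer is the endpoint whose value is nearest to $s$. With this reading, $h_{k+1} = f^{-1} \circ \pi$ still holds and the argument goes through unchanged.

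Step 3 is where care is needed. One has to check that the piecewise definition in \cref{prop:clif-ls} is continuous at the junctions $s = J_k^{\min}$ and $s = J_k^{\max}$, and that the endpoint labels in the statement are interpreted consistently with monotonicity. The composition identity handles both points at once, so once it is established the rest of the argument is routine.
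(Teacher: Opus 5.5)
Your proof is correct. It uses the same two ingredients as the paper's proof but organizes them differently.

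The paper argues by cases. For $s_1,s_2\in\mc{J}_k$ it appeals directly to inversion. For $s_1<J_k^{\min}\le s_2$ it writes $\abs{s_2-s_1}\ge\abs{f(h_{k+1}(s_2))-f(a_k)}\ge L_{k+1}\abs{h_{k+1}(s_2)-a_k}$ using the Mean Value Theorem. The remaining configurations are left as ``similar.''

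Your factorization $h_{k+1}=f^{-1}\circ\pi$ contains the same pieces. The paper's first inequality is exactly the $1$-Lipschitz property of your clamp $\pi$ in that case, and the second is your Step 2. Packaging them as a composition covers every configuration at once: both points inside, one outside, or both outside on the same or opposite sides.

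Your version also fixes something the paper's write-up misses. The paper's chain uses $f(a_k)=J_k^{\min}$, so it silently assumes $f$ is increasing. The piecewise formula in \cref{prop:clif-ls}, read literally (returning $a_k$ for $s<J_k^{\min}$), is wrong when $f$ is decreasing. In that case it would make $h_{k+1}$ jump from $a_k$ to $b_k$ at $s=J_k^{\min}$, so the lemma would fail.

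Reading $h_{k+1}$ as the argmin in \cref{eq:classical-step} is the right fix, and it is what the proof of \cref{prop:clif-ls} actually establishes. Under that reading, your argument gives the lemma in both orientations.
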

\begin{proof}
    When $s_1, s_2 \in \mc{J}_k$, the claim follows directly from function inversion. We prove the case when one point lies outside $\mc{J}_k$; the other cases are similar. For example, consider $s_1 < J_k^{\min}$ and $s_2 \in \mc{J}_k$. Then
    \begin{equation}
        \abs{s_2 - s_1} \ge \abs{f_{d_{k+1},\mc{I}_{k}}^\star(h_{k+1}(s_2)) - f_{d_{k+1},\mc{I}_{k}}^\star(a_k)} \ge L_{k+1} \abs{h_{k+1}(s_2) - a_k} = L_{k+1} \abs{h_{k+1}(s_2) - h_{k+1}(s_1)}.
    \end{equation}
    Here, the two inequalities follow from $s_1 < J_k^{\min} \le s_2$ and the Mean Value Theorem.
\end{proof}

\section{Resource Cost Analysis}\label{app:sec:resource_analysis}

This section provides the detailed resource analysis underlying \cref{thm:QSP-PE-performance-guarantee}. We derive the quantum and classical resource costs of QSP-based phase estimation (QSP-PE), including the query complexity, measurement requirements, and classical post-processing cost under the iterative refinement scheme.

\subsection{Heisenberg-limited quantum resource cost}

Recall that the depth schedule of our quantum algorithm is a geometric sequence. As a result, the total quantum resource cost is dominated by the final step, which suggests the optimal Heisenberg-limited scaling. In this subsection, we formally demonstrate that this adaptive framework attains this optimal scaling.

\begin{thm}\label{thm:quantum-cost}
Given a target precision $\epsilon > 0$ and a failure probability $\delta \in (0,1)$, let $ K = \lfloor \log_q(1/(4\epsilon)) \rfloor$ be the number of iterations required to reach $\epsilon$. Let $\kappa$ be the sensitivity efficiency. Under the adaptive schedule with the geometric depth update
$d_k = q^k$ for $q \ge 2$ and the constant measurement update $m_k = \lceil 8q^2/\kappa^2 \log(2K/\delta) \rceil$, the quantum algorithm outputs the final estimator $\hat{\theta}$ satisfies:
\begin{equation}
\mathbb{P}(|\hat{\theta} - \theta| \le \epsilon) \ge 1 - \delta,
\end{equation}
where the total number of queries to $U(\theta)$ is
\begin{equation}
    T \le \frac{4q^2}{\kappa^2 \epsilon} \left( \log(2/\delta) + \log\log(1/(4 \epsilon)) - \log\log q \right) = \wt{\Or}\left(\frac{q^2}{\kappa^2 \epsilon}\log\left(\frac{1}{\delta}\right) \right).
\end{equation}
\end{thm}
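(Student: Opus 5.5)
We prove the two claims separately: the success probability by a union bound over stages, and the query count by summing a geometric series.

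\emph{Correctness.} We take the stage failure budget $\delta_k = \delta/K$ and chain the one-step bound of \cref{lem:one-step-failure}. With $m_k = \lceil 8q^2\kappa^{-2}\log(2K/\delta)\rceil$ and $L_k \ge \kappa d_k$, we set the next radius by \cref{eq:CI}:
\[
r_k = \frac{1}{L_k}\sqrt{\frac{1}{2m_k}\log\frac{2K}{\delta}} \le \frac{1}{4 q d_k} = \frac{1}{4 d_{k+1}} .
\]
This shows the confidence-interval update is consistent with the depth-radius relation \cref{eq:CI-d-relation} used to design the next signal. Plugging this $r_k$ into the Hoeffding bound gives $\PP(S_k^c\mid S_{k-1}) \le \delta/K$.

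For the initial stage $S_0$, the depth-one signal $\cos^2\theta$ has slope bounded below on the working interval, so the same Hoeffding argument applies there. Alternatively, $S_0$ can be absorbed into the union bound with the same $m_0$. A union bound over the chain of conditional failures then gives
\[
\PP\Bigl(\bigcap_{k\le K} S_k\Bigr) \ge 1 - \sum_k \PP(S_k^c\mid S_{k-1}) \ge 1-\delta .
\]
On this event, $|\hat\theta_K-\theta| \le r_K \le 1/(4q^{K+1})$. This is at most $\epsilon$ because $K=\lfloor \log_q(1/(4\epsilon))\rfloor$ gives $q^{K+1} > 1/(4\epsilon)$.

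\emph{Query count.} The total number of queries is
\[
T=\sum_{k=1}^K m_k d_k \le m\sum_{k\le K} q^k \le m\,\frac{q}{q-1}\,q^K .
\]
For $q\ge 2$ we have $q/(q-1)\le 2$, and $q^K\le 1/(4\epsilon)$, so $T \le m/(2\epsilon)$. Substituting $m \approx 8q^2\kappa^{-2}\log(2K/\delta)$ gives $T\lesssim 4q^2/(\kappa^2\epsilon)\cdot\log(2K/\delta)$. Expanding
\[
\log(2K/\delta)=\log(2/\delta)+\log K, \qquad \log K \le \log\log(1/(4\epsilon)) - \log\log q,
\]
yields the stated bound.

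\emph{Main obstacle.} The delicate point is bookkeeping rather than the summation. The ceiling in $m_k$ contributes an additive $\sum_k d_k = \Or(1/\epsilon)$ term, which must be absorbed into the stated constant; this works since $\log(2/\delta) \ge \log 2$ dominates the extra $\kappa^2/(8q^2)$ factor. We also need to justify $L_k \ge \kappa d_k$ uniformly in $k$. This is exactly the definition of the sensitivity efficiency of the employed signal family, guaranteed by the design in \cref{def:relaxed_optimization} with $r=1/(4d)$, so we take it as an assumption of the theorem.
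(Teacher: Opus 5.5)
Your proposal is correct and follows the paper's proof essentially step for step. Both use the chained union bound with $\delta_k=\delta/K$, the Hoeffding radius $r_k\le 1/(4qd_k)$ giving $r_K\le 1/(4q^{K+1})\le\epsilon$, the geometric sum $\sum_{k=1}^K q^k\le 2q^K\le 1/(2\epsilon)$, and the estimate $\log K\le\log\log(1/(4\epsilon))-\log\log q$.

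Two of your extra bookkeeping remarks overreach, although the paper does not handle these points either. First, the ceiling in $m_k$ adds up to $\sum_k d_k\le 1/(2\epsilon)$ queries. That yields the stated bound only up to a factor $1+\kappa^2/(8q^2\log(2K/\delta))$, not with the exact constant $4$, which in fact fails, e.g.\ for $q=2$, $\kappa=1$, $\delta=1/2$, $\epsilon=2^{-22}$. The paper simply drops the ceiling. Second, $\cos^2\theta$ has no slope lower bound at stage $0$, since it vanishes at $\theta=0,\pi/2$. So $S_0$ must either be assumed, as the paper implicitly does by starting its recursion from $\PP(S_0^c)=0$, or be budgeted separately, e.g.\ with $\delta/(K+1)$ per stage.
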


\begin{proof}
We first use the union bound to provide the final success probability of our algorithm. Consider the success event $S_k = \{|\hat{\theta}_k-\theta|\le r_k\}$ at the $k$th step, where the confidence interval is given in Eq.~\eqref{eq:CI}. Lemma.~\ref{lem:one-step-failure} provides that the conditional failure probability at any single step is bounded by $\delta_k$. We decompose the global failure event $S_K^c$ by considering the chain of dependencies.
\begin{equation}
S_{k+1}^c = (S_{k+1}^c \cap S_k) \cup (S_{k+1}^c \cap S_k^c) \subset (S_{k+1}^c \cap S_k) \cup S_k^c.
\end{equation}
Applying the union bound, we have:
\begin{equation}
\mathbb{P}(S_{k+1}^c) \le \mathbb{P}(S_k^c) + \mathbb{P}(S_{k+1}^c | S_k) \le \mathbb{P}(S_k^c) + \delta_{k+1}.
\end{equation}
By recursively applying this relation across $K$ steps, the global failure probability is bounded as:
\begin{equation}
    \mathbb{P}(S_K^c) \le \sum_{j=1}^K \delta_j.
\end{equation}
Then it suffices to set single-step failure probability as $\delta_j = \delta / K$, which sums up to the target final failure probability $\delta$. Therefore, the final success probability is $\mathbb{P}(|\hat{\theta}_K - \theta| \le r_K) \ge 1 - \delta$.

Next, we evaluate the total quantum resource cost $T$. Given the choice of $m_k = 8(q^2/\kappa^2) \log(2K/\delta)$ and $d_k=q^k$, the total cost is 
\begin{equation}\label{eq:total-resource}
     T = \sum_{k = 1}^K d_k m_k = \sum_{k = 1}^K q^k \cdot \frac{8q^2}{\kappa^2} \log(2 K / \delta) =  \frac{8}{\kappa^2} \log(2 K / \delta) q^{K + 2} \left(\frac{q - q^{- (K-1)}}{q - 1}\right)
\end{equation}
Given the constant $q \ge 2$, $ A(q):=\frac{q - q^{- (K-1)}}{q-1} \in (1,2)$ can be treated as a constant. This leads to  
\begin{equation}
    T = \frac{8}{\kappa^2} q^{K + 2} \log(2 K / \delta) A(q).
\end{equation}
Considering the final precision at the $K$-th step, it satisfies
\begin{equation}
r_K = \frac{1}{\kappa d_K} \sqrt{\frac{1}{2 m_K} \log\frac{2 K}{\delta}} = \frac{1}{4q^{K+1}} \le \epsilon.
\end{equation}
It suffices to choose $ K = \lfloor \log_q(1/(4\epsilon)) \rfloor \ge \log_q(1/(4\epsilon))-1$ so that the last inequality holds.

Then, the total resource required to bound the accuracy by $\epsilon$ is at most 
\begin{equation}
    T \le \frac{4 q^2}{\kappa^2 \epsilon} \left( \log(2/\delta) + \log\log(1/(4 \epsilon)) - \log\log q \right)=\wt{\Or}\left(\frac{q^2}{\kappa^2 \epsilon}\log\left(\frac{1}{\delta}\right) \right).
\end{equation}
Furthermore, the complexity is of the tight order
\begin{equation}
T = \wt{\Theta}\left(\frac{q^2}{\kappa^2 \epsilon}\log\frac{1}{\delta}\right).
\end{equation}
\end{proof}
The above shows that the total resource $T$ scales with $\wt{\Or}(1/\epsilon)$, thus our quantum algorithm attains Heisenberg-limited scaling up to a very small log-log factor. Furthermore, $T$ exhibits a dependence on a prefactor $q^2 / \kappa^2$. A larger $q$ accelerates the refinement process by reducing the total number of iterations $K$ with more aggressive depth, while requiring more measurements per step. Meanwhile, improving the sensitivity efficiency $\kappa$ increases the effective utilization of the query and reduces the algorithmic cost. Our algorithm not only achieves asymptotic optimality but also offers practically optimized solutions considering the realistic resource constraints with flexible $(d_k, m_k)$-schedule adjustment. 

\subsection{Classical computational cost of post-processing}

Thanks to the local monotonicity of the signal function, the classical post-processing can be carried out efficiently via bisection method. As shown in \cref{lem:classical-cost-per-step}, each post-processing step requires a total number of basic floating-point operations that scales linearly with the depth parameter. Combining this with the quantum experiment schedule, we can derive the total classical computational cost.

\begin{thm}
    Let $\epsilon$ be the target estimation precision, let $\eta$ be the relative precision required in each post-processing step, and let $q \ge 2$ be the radius shrinkage parameter. Under the same assumptions as in \cref{thm:quantum-cost}, in terms of floating-point operations, the total classical computational cost of post-processing is
    \begin{equation}
        C_\mathrm{total} = \Or\left(\frac{\log(\eta q)}{\epsilon}\right).
    \end{equation}
\end{thm}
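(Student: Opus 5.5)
The plan is to sum the per-step classical cost from \cref{lem:classical-cost-per-step} over the $K$ refinement stages. We use the geometric depth schedule $d_k = q^k$ of \cref{thm:quantum-cost} and bound the resulting geometric series by its last term.

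First, we fix the per-stage cost. By \cref{lem:classical-cost-per-step}, stage $k$ needs $n_k = \lceil \log_2(\eta r_{k-1}/r_k)\rceil$ bisection iterations, and each iteration evaluates a degree-$d_k$ trigonometric polynomial at cost $\Or(d_k)$. Under the schedule of \cref{thm:quantum-cost}, the radii satisfy $r_k = 1/(4 d_{k+1}) = 1/(4q^{k+1})$. Hence $r_{k-1}/r_k = q$ exactly, and $n_k = \Or(\log(\eta q))$ uniformly in $k$. Two small checks are needed here:
\begin{itemize}
\item the initial stage and the clipping branches of \cref{prop:clif-ls} (when $s_k\notin\mc{J}_{k-1}$) cost only $\Or(1)$ extra comparisons;
\item evaluating $f_k = g_k^2$ from the stored cosine coefficients of $g_k$ is indeed $\Or(d_k)$, for instance by Clenshaw recurrence in $\cos\theta$.
\end{itemize}
If $\zeta>1$ is used, the ratio $r_{k-1}/r_k$ is still $q$, so nothing changes.

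Next, we sum over stages:
\begin{equation}
C_{\mathrm{total}} = \sum_{k=1}^K \Or\bigl(q^k \log(\eta q)\bigr) = \Or\Bigl(\log(\eta q)\, \frac{q^{K+1}}{q-1}\Bigr) = \Or\bigl(\log(\eta q)\, q^{K}\bigr),
\end{equation}
where the last step uses $q\ge 2$. With $K = \lfloor \log_q(1/(4\epsilon))\rfloor$ we have $q^K \le 1/(4\epsilon)$, which gives $C_{\mathrm{total}} = \Or(\log(\eta q)/\epsilon)$.

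The main obstacle is bookkeeping rather than analysis. The implied constant in $\Or(d_k n_k)$ has to be independent of $k$. This requires that $r_{k-1}/r_k$ be exactly constant under the schedule, and not merely approximately constant through the ceilings in $d_{k+1}=\lceil 1/(4r_k)\rceil$. I would handle this by working directly with the prescribed $d_k=q^k$ and $r_k = 1/(4q^{k+1})$ from Algorithm~\ref{alg:iterative-refinement}, rather than with the derived recursion.

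We do not count the cost of the signal-design LP in \cref{alg:sqaured-sig-optimization}. The statement concerns only post-processing, and that LP depends only on $\hat\theta_{k-1}$, so it can be regarded as a separate design cost. I would note this explicitly.
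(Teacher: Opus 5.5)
Your proposal is correct and takes the same approach as the paper: you sum the per-stage cost $\Or(d_k\log(q\eta))$ from \cref{lem:classical-cost-per-step} over the geometric schedule $d_k=q^k$, and you bound $\sum_{k=1}^K d_k = A(q)\,q^K \le A(q)/(4\epsilon)$, where $A(q)\le 2$ for $q\ge 2$. Your extra checks (that $r_{k-1}/r_k=q$ holds exactly under the schedule, also with $\zeta$, and that the initial and clipping steps cost only $\Or(1)$) are consistent with the paper's setup and make explicit what the paper's proof takes for granted.
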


\begin{proof}
    Summing the stepwise cost from \cref{lem:classical-cost-per-step}, we obtain
    \begin{equation}
        C_\text{total} = \sum_{k = 1}^K C_k = \Or\left( \log(q \eta) \sum_{k = 1}^K d_k \right).
    \end{equation}
    In \cref{eq:total-resource}, we showed that
    \begin{equation}
        \sum_{k = 1}^K d_k = A(q) q^K \le \frac{A(q)}{4 \epsilon} = \Or\left(\frac{1}{\epsilon}\right).
    \end{equation}
    Combining the two bounds proves the result.
\end{proof}

The bisection method has logarithmic complexity for solving a nonlinear equation to a target accuracy. Meanwhile, evaluating a general trigonometric polynomial at one point costs linearly in the degree parameter $d$. Therefore, the cost bound in \cref{lem:classical-cost-per-step} is arguably tight up to constants. As a result, the total classical computational cost of post-processing, measured in floating-point operations, also achieves the tight $1/\epsilon$ scaling.

\section{Alternative Measurement Basis}\label{sec:full-range-coverage}
In the previous sections, we assumed that the signal function is obtained by measuring the transition probability from the initial state $\ket{0}$ to the measurement outcome $\ket{0}$. This leads to the signal class $f(\theta) = g^2(\theta)$ with $g \in \mc{G}_d$. Thanks to the universality of QSP, this gives a highly flexible representation of signal functions, and the sensitivity can be improved by designing an optimal signal within this class.

However, this choice of signal function also has a limitation. Since the resulting signal is represented through a cosine series, when $\theta$ is close to zero its derivative becomes uniformly small because of the local behavior of cosine near zero. As a result, the sensitivity of this signal class is limited in regimes where $\theta$ is close to a singular point such as zero. This issue can be resolved by changing the initialization or measurement basis. In this section, we show that this can be done by measuring the transition probability from $\ket{0}$ to $\ket{+}$ instead. For this protocol, the measurement probability can be derived from the QSP representation in \cref{thm:qsp_rep}:
\begin{equation}\label{eq:0to+}
\begin{split}
     h(\theta)
     &= \abs{\braket{+ | \mc{Q}(\theta,\Phi) | 0}}^2 = \abs{\frac{1}{\sqrt{2}}\left(u(\cos\theta) + \sin\theta\, w(\cos\theta) + \I v(\cos\theta)\right)}^2 \\
     &= \frac{1}{2}\left(2 \sin\theta\, w(\cos\theta) u(\cos\theta) + u^2(\cos\theta) + v^2(\cos\theta) + \sin^2\theta\, w^2(\cos\theta)\right) \\
     &= \frac{1}{2} + \sin\theta\, w(\cos\theta) u(\cos\theta).
\end{split}
\end{equation}

When the phase factors are chosen as $\Phi = 0 \in \RR^{d+1}$, we have
\begin{equation}
    h_0(\theta)
    = \frac{1}{2} + \sin\theta\, U_{d-1}(\cos\theta)\, T_d(\cos\theta)
    = \frac{1}{2} + \sin(d\theta)\cos(d\theta)
    = \frac{1}{2} + \frac{1}{2}\sin(2d\theta).
\end{equation}

Its derivative near zero is uniformly bounded below as follows. When $\mc{I} \subset (-\pi / (4d), \pi / (4d))$ and $r := \max_{\theta \in \mc{I}} \abs{\theta}$, we have
\begin{equation}
    \min_{\theta \in \mc{I}} \abs{h_0'(\theta)} \ge d \cos(2dr).
\end{equation}
Thus, when the working interval is close to zero, even this simple choice of phase factors already gives very good performance. This also suggests that one can optimize not only the circuit structure through the phase factors, but also the initialization and measurement basis when designing an optimal protocol.

It is also worth noting that, in applications involving high-dimensional Hamiltonians, this issue can be avoided by shifting the Hamiltonian, equivalently by adding a phase to the evolution operator.

\section{Extension to High Dimensions}\label{sec:high-d-extension}
\begin{figure}
    \centering
    \begin{center}
        \begin{quantikz}[row sep=0.6cm, column sep=0.45cm]
            \lstick{$\ket{0}$}
            & \gate{e^{\I\varphi_d X}}
            & \ctrl{1}
            & \gate{e^{\I\varphi_{d-1} X}}
            & \ctrl{1}
            & \push{\cdots}
            & \ctrl{1}
            & \gate{e^{\I\varphi_1 X}}
            & \ctrl{1}
            & \gate{e^{\I\varphi_0 X}}
            & \meter{} \\
            
            \lstick{$\ket{\psi}$}
            & \qw
            & \gate{O_\mc{H}}
            & \qw
            & \gate{O_\mc{H}}
            & \push{\cdots}
            & \gate{O_\mc{H}}
            & \qw
            & \gate{O_\mc{H}}
            & \qw
            & \qw
        \end{quantikz}
    \end{center}
    \caption{Quantum eigenvalue transformation circuit for implementing high-dimensional extensions of QSP-PE.}
    \label{fig:qetu}
\end{figure}
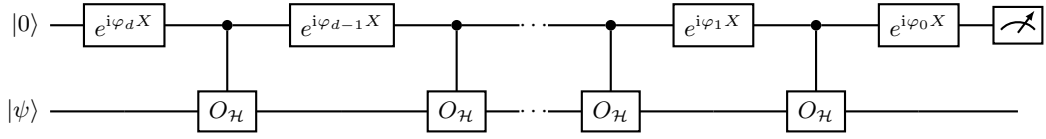

We start from the signal representation induced by a general $n$-qubit standard QPE circuit, as depicted in \cref{fig:main}. A direct calculation shows that the measurement probabilities are
\begin{equation}
    p_{\Re}(\mathcal{H}) = \frac{1}{2} + \frac{1}{2} \braket{\psi | \cos(d \mathcal{H}) | \psi}, 
    \quad
    p_{\Im}(\mathcal{H}) = \frac{1}{2} + \frac{1}{2} \braket{\psi | \sin(d \mathcal{H}) | \psi}.
\end{equation}
When the input state is an eigenstate, $\mathcal{H}\ket{\psi} = \lambda \ket{\psi}$, the dynamics reduce to a qubitized subspace $\mathcal{S}_\lambda := \mathrm{span}\{\ket{0}\ket{\psi}, \ket{1}\ket{\psi}\}$, where the evolution acts as an effective $Z$ rotation $e^{\I \theta Z}$ with $\theta = \lambda/2$. This argument recovers the result derived in $\mathrm{SU}(2)$.

We now present the general signal representation induced by a QSP-PE circuit. As discussed earlier, QSP provides an algorithmic framework that connects an approximation theory in $\mathrm{SU}(2)$ with function transformations of high-dimensional matrices. In this section, we show how to lift the function-level analyses in this appendix to a QPE procedure for large matrices.

Assume access to a controlled time evolution $O_\mc{H} := e^{-\I \mc{H}}$, where $\mc{H}$ has been shifted and rescaled by adding phase gates and tuning evolution time so that $0 \prec \mc{H} \prec \frac{\pi}{2} I$. A particular QSP circuit built from this Hamiltonian input model is shown in \cref{fig:qetu}. It is worth noting that this circuit differs from those used in quantum algorithmic applications \cite{dong_ground-state_2022,motlagh_generalized_2024} in that it employs only forward time evolution. The transformation implemented by this circuit is characterized by the following result.

\begin{thm}\label{thm:qetu_mat_rep}
Let $\Phi \in \RR^{d+1}$ be a set of phase factors, and let $(u,v,w)$ be the polynomials generated by the associated QSP sequence as in \cref{thm:qsp_rep}. Let $\ket{\psi}$ be an input quantum state. Denote by $p_{0 \to 0}$ and $p_{0 \to +}$ the probabilities that, starting from ancilla state $\ket{0}$, the measurement of the ancilla is projected onto $\ket{0}$ and $\ket{+}$, respectively, for the circuit in \cref{fig:qetu}. Then
\begin{equation}
    p_{0 \to 0}
    = \braket{\psi | u^2(\cos(\mc{H}/2)) | \psi}
    \quad \text{and} \quad
    p_{0 \to +}
    = \frac{1}{2}
      + \braket{\psi | \sin(\mc{H}/2)\, u(\cos(\mc{H}/2))\, w(\cos(\mc{H}/2)) | \psi}.
\end{equation}
In other words, the measurement statistics follow the same functional structure as in the scalar case, with the scalar variable replaced by the matrix $\mc{H}/2$.
\end{thm}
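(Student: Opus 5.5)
The plan is to reduce to the $\mathrm{SU}(2)$ model by diagonalizing $\mc{H}$ and then reuse \cref{thm:qsp_rep}. Write $\mc{H} = \sum_j \lambda_j \ket{\psi_j}\bra{\psi_j}$ with $0<\lambda_j<\pi/2$, and expand $\ket{\psi} = \sum_j c_j \ket{\psi_j}$.

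First I restrict the circuit to the eigenbasis. Each controlled evolution acts on $\mathrm{span}\{\ket{0}\ket{\psi_j},\ket{1}\ket{\psi_j}\}$ as $\mathrm{diag}(1, e^{-\I\lambda_j})$, up to which control value triggers it. This equals $e^{-\I\lambda_j/2}\, e^{\pm\I(\lambda_j/2)Z}$, a global phase times a $Z$ rotation by $\theta_j = \lambda_j/2$. If the sign comes out wrong, I either relabel $\theta_j\to-\theta_j$, which leaves $\cos\theta_j$ unchanged and flips $\sin\theta_j$, or conjugate by $X$. The sign of the $\sin$ term in $p_{0\to+}$ must be fixed to match the convention of \cref{thm:qsp_rep}, possibly by reading the gate order in \cref{fig:qetu} as $\mc{Q}(\theta,\Phi)$ with reversed indexing. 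The $X$-rotations act only on the ancilla, so each two-dimensional subspace is invariant. On it the whole circuit acts as $e^{-\I d\lambda_j/2}\,\mc{Q}(\theta_j,\Phi)\otimes\ket{\psi_j}\bra{\psi_j}$; the phase factors appear in reversed order, matching the circuit drawn right-to-left. Hence the output state is $\sum_j c_j e^{-\I d\lambda_j/2}\,\big(\mc{Q}(\theta_j,\Phi)\ket{0}\big)\ket{\psi_j}$.

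Next I compute the probabilities. For a single-qubit projector $\ket{\phi}\bra{\phi}$ with $\phi\in\{0,+\}$, the probability is $\sum_j |c_j|^2\,|\braket{\phi|\mc{Q}(\theta_j,\Phi)|0}|^2$. The cross terms $j\neq k$ vanish because $\braket{\psi_j|\psi_k}=0$, so the relative phases $e^{-\I d\lambda_j/2}$ drop out. By \cref{thm:qsp_rep} the $0\to0$ amplitude is $u(\cos\theta_j)$. By the computation in \cref{eq:0to+}, the $0\to+$ probability is $\tfrac12+\sin\theta_j\,w(\cos\theta_j)u(\cos\theta_j)$, which uses the normalization (A3). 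Summing with weights $|c_j|^2$ and applying the spectral functional calculus, $\sum_j |c_j|^2 F(\lambda_j/2) = \braket{\psi|F(\mc{H}/2)|\psi}$, gives both formulas.

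The main obstacle is purely bookkeeping. The sign convention of $O_\mc{H}=e^{-\I\mc{H}}$, which ancilla value controls, and the ordering of the phases must all combine to give exactly $\mc{Q}(\theta,\Phi)$ with $\theta=+\lambda/2$ rather than its transpose or its $\theta\to-\theta$ version. Transposition is harmless here because $\mc{Q}$ has equal diagonal entries. The sign flip in $\theta$ affects only the $\sin$ term. If needed, I would absorb it by noting that the Z-symmetric structure allows reversing the phase sequence, or by stating the result for $-\mc{H}$, which is equivalent under the paper's shifting convention.
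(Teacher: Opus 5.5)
Your proposal is correct and follows the paper's argument. You restrict to the invariant blocks $\mathrm{span}\{\ket{0}\ket{\psi_k},\ket{1}\ket{\psi_k}\}$, on which the circuit acts as $\mc{Q}(\lambda_k/2,\Phi)$, and then apply the scalar formulas and the spectral calculus. The only cosmetic difference is the per-block phase $e^{-\I d\lambda_k/2}$: the paper removes it by appending $O_{\mc{H}}^{-d/2}$ and splitting each controlled evolution into $\mathrm{diag}(e^{\I\mc{H}/2},e^{-\I\mc{H}/2})$, whereas you let it drop out through orthogonality of the cross terms.

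Your sign worry resolves directly. The control is on $\ket{1}$, so $\mathrm{diag}(1,e^{-\I\lambda})=e^{-\I\lambda/2}e^{\I(\lambda/2)Z}$, which gives $\theta=+\lambda/2$, and the circuit reads exactly as $\mc{Q}(\theta,\Phi)$. This matters because your fallbacks would not repair a wrong sign: $\theta\to-\theta$ (equivalently $X$-conjugation), transposition, and reversing the phases would each flip the sign of the $\sin\theta\,w\,u$ term in $p_{0\to+}$.
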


\begin{proof}
First note that applying the unitary $O_\mc{H}^{-d/2}$ on the system register at the end of the circuit does not change the measurement probability of the ancilla. It is therefore equivalent to analyze the circuit on the left-hand side of \cref{fig:qetu_proof}, whose matrix representation we denote by $\wt{\mc{U}} (\mc{H},\Phi)$. This is related to the matrix representation $\mc{U} (\mc{H},\Phi)$ of the original circuit by
\begin{equation}\label{eqn:qetu_circ_rep_relation}
    \mc{U} (\mc{H},\Phi)
    = (I \otimes O_\mc{H}^{d/2})\, \wt{\mc{U}} (\mc{H},\Phi).
\end{equation}

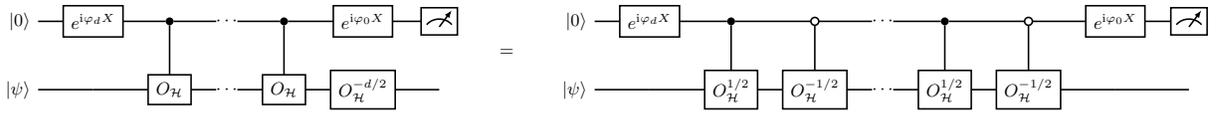
\begin{figure}[H]
    \centering
    \resizebox{0.9\textwidth}{!}{%
    \begin{quantikz}[row sep=0.6cm, column sep=0.45cm]
    \lstick{$\ket{0}$}
    & \gate{e^{\I\varphi_d X}}
    & \ctrl{1}
    & \push{\cdots}
    & \ctrl{1}
    & \gate{e^{\I\varphi_0 X}}
    & \meter{} \\
    \lstick{$\ket{\psi}$}
    & \qw
    & \gate{O_\mc{H}}
    & \push{\cdots}
    & \gate{O_\mc{H}}
    & \gate{O_\mc{H}^{-d/2}}
    & \qw
    \end{quantikz}
    \qquad
    =
    \qquad
    \begin{quantikz}[row sep=0.6cm, column sep=0.45cm]
    \lstick{$\ket{0}$}
    & \gate{e^{\I\varphi_d X}}
    & \ctrl{1} & \octrl{1}
    & \push{\cdots}
    & \ctrl{1} & \octrl{1}
    & \gate{e^{\I\varphi_0 X}}
    & \meter{} \\
    \lstick{$\ket{\psi}$}
    & \qw
    & \gate{O_\mc{H}^{1/2}} & \gate{O_\mc{H}^{-1/2}}
    & \push{\cdots}
    & \gate{O_\mc{H}^{1/2}}  & \gate{O_\mc{H}^{-1/2}}
    & \qw
    & \qw
    \end{quantikz}%
    }
    \caption{Quantum circuits equivalent to the QSP circuit in \cref{fig:qetu}, used in the proof of \cref{thm:qetu_mat_rep}.}
    \label{fig:qetu_proof}
\end{figure}

Because the fractional evolutions commute with other gates, the $d$ copies of $O_\mc{H}^{- 1/2}$ can be redistributed so that each query in the circuit carries a single fractional time step. This redistribution yields the circuit identity shown in \cref{fig:qetu_proof}. Each composite controlled time evolution then has the matrix form
\begin{equation}
    \begin{pmatrix}
        O_\mc{H}^{-1/2} & 0 \\
        0          & O_\mc{H}^{1/2}
    \end{pmatrix}
    = \begin{pmatrix}
        e^{\I \mc{H}/2}   & 0 \\
        0            & e^{-\I \mc{H}/2}
      \end{pmatrix}.
\end{equation}
Let $\ket{\psi_k}$ be an eigenvector of $\mc{H}$ with eigenvalue $\lambda_k$. In the two-dimensional subspace
\begin{equation}
    \mc{B}_{\lambda_k}
    := \mathrm{span}\left\{\ket{0}\ket{\psi_k},\, \ket{1}\ket{\psi_k}\right\},
\end{equation}
this composite controlled evolution acts as a $Z$-rotation $e^{\I \lambda_k Z/2}$. Thus, on each block $\mc{B}_{\lambda_k}$, the circuit reduces to an interleaving of $X$- and $Z$-rotations:
\begin{equation}
    \big[\wt{\mc{U}} (\mc{H},\Phi)\big]_{\mc{B}_{\lambda_k}}
    = e^{\I \varphi_0 X} \prod_{j=1}^d \big( e^{\I \lambda_k Z/2} e^{\I \varphi_j X} \big)
    = \mc{Q}(\lambda_k/2,\Phi).
\end{equation}

Lifting back to the full $2^{n+1}$-dimensional space, we obtain
\begin{equation}
    \wt{\mc{U}} (\mc{H},\Phi)
    = \begin{pmatrix}
        u(\cos(\mc{H}/2)) &
        - \sin(\mc{H}/2)\, w(\cos(\mc{H}/2)) + \I v(\cos(\mc{H}/2)) \\
        \sin(\mc{H}/2)\, w(\cos(\mc{H}/2)) + \I v(\cos(\mc{H}/2)) &
        u(\cos(\mc{H}/2))
      \end{pmatrix}.
\end{equation}
Combining this with \cref{eqn:qetu_circ_rep_relation} gives the full matrix representation of the circuit in \cref{fig:qetu}. In particular, the probabilities of observing ancilla outcomes $\ket{0}$ and $\ket{+}$, starting from $\ket{0}\ket{\psi}$, are
\begin{equation}
    p_{0 \to 0}(\mc{H})
    = \norm{(\bra{0} \otimes I)\, \mc{U} (\mc{H},\Phi)\, \ket{0}\ket{\psi}}_2^2
    = \braket{\psi | u^2(\cos(\mc{H}/2)) | \psi}
\end{equation}
and
\begin{equation}
    p_{0 \to +}(\mc{H})
    = \norm{(\bra{+} \otimes I)\, \mc{U} (\mc{H},\Phi)\, \ket{0}\ket{\psi}}_2^2
    = \frac{1}{2}
      + \braket{\psi | \sin(\mc{H}/2)\, u(\cos(\mc{H}/2))\, w(\cos(\mc{H}/2)) | \psi}.
\end{equation}
This completes the proof.

\end{proof}

%TC:endignore

\end{document}